\newcommand {\be}{\begin{equation}}
\newcommand {\ee}{\end{equation}}
\newcommand{\beq}{\begin{equation}}
\newcommand{\eeq}{\end{equation}}
\newcommand{\beqnn}{\begin{equation*}}
\newcommand{\eeqnn}{\end{equation*}}
\newcommand{\bea}{\begin{eqnarray}}
\newcommand{\eea}{\end{eqnarray}}
\newcommand{\ba}{\begin{eqnarray}}
\newcommand{\ea}{\end{eqnarray}}
\newcommand{\beann}{\begin{eqnarray*}}
\newcommand{\eeann}{\end{eqnarray*}}
\newcommand{\bes} {\begin{subequations}}
\newcommand{\ees} {\end{subequations}}
\newcommand {\nn}{\notag}
\newcommand{\mcV}{\mathcal{V}}
\newcommand{\mcE}{\mathcal{E}}
\newtheorem{mytheorem}{Theorem}
\newtheorem{mylemma}{Lemma}
\newtheorem{mycorollary}{Corollary}
\newtheorem{definition}{Definition}
\begin{document}

\title{Low overhead universality and quantum supremacy using only $Z$-control}

\author{Brian Barch}
\affiliation{Department of Physics and Astronomy, University of Southern California, Los Angeles, California 90089, USA}
\affiliation{Center for Quantum Information Science \& Technology, University of Southern California, Los Angeles, California 90089, USA}
\author{Razieh Mohseninia}
\affiliation{Center for Quantum Information Science \& Technology, University of Southern California, Los Angeles, California 90089, USA}
\author{Daniel Lidar}
\affiliation{Department of Physics and Astronomy, University of Southern California, Los Angeles, California 90089, USA}
\affiliation{Center for Quantum Information Science \& Technology, University of Southern California, Los Angeles, California 90089, USA}
\affiliation{Department of Electrical and Computer Engineering, University of Southern California, Los Angeles, California 90089, USA}
\affiliation{Department of Chemistry, University of Southern California, Los Angeles, California 90089, USA}

\begin{abstract}
We consider a model of quantum computation we call ``Varying-$Z$" (V$Z$),  defined by applying controllable $Z$-diagonal Hamiltonians in the presence of a uniform and constant external $X$-field,  
and prove that it is universal, even in 1D. Universality is demonstrated by construction of a universal gate set with $O(1)$ depth overhead. We then use this construction to describe a circuit whose output distribution cannot be classically simulated unless the polynomial hierarchy collapses, with the goal of providing a low-resource method of demonstrating quantum supremacy. The V$Z$ model can achieve quantum supremacy in $O(n)$ depth, equivalent to the random circuit sampling models despite a higher degree of homogeneity: it requires no individually addressed $X$-control. 
\end{abstract}

\maketitle

\section{Introduction}

In the current era of noisy intermediate scale quantum computers~\cite{Preskill:2018aa}, 
quantum architectures are limited by connectivity, gate fidelity, and various other sources of errors that limit both circuit depth and width. 
In response, various models of quantum computation have been developed that are designed to be 
relatively easy to implement on existing hardware.
The strength of these models is then confirmed by demonstrating their ability to achieve universality~\cite{Deutsch:85,DiVincenzo1995,Lloyd:95,lloyd,raussendorf} 
or quantum supremacy~\cite{Preskill:2012aa,bremner1,farhi,aaronson2016,Harrow:2017aa,bouland2018,bermejo,haferkamp,Arute:2019aa}. Universality is a stronger attribute, as it implies the ability to reproduce the quantum supremacy results of other models.

In this spirit, here we propose a model of quantum computation that is computationally universal even when restricted to a one-dimensional (1D) chain of qubits with only nearest-neighbor interactions and a limited degree of control. This ``Varying-$Z$" (V$Z$) model is defined by applying a series of $Z$-diagonal Hamiltonians in the presence of a constant and homogeneous (i.e., qubit-permutation-invariant) $X$-field requiring no individually addressed control. We consider the V$Z$ model both on a general graph and in 1D. The latter is theoretically motivated by the question of the quantum computational power of 1D systems~\cite{Aharonov:2009tm,raussendorf} and experiments with 1D systems, such as chains of fluxonium qubits~\cite{Meier:2015aa}, and chains of transmons which were used in a prequel to Google's quantum supremacy work~\cite{Neill:2018aa}.
%\footnote{Note that while trapped polar molecules~\cite{DeMille:2002aa}, trapped ions in a linear Paul trap~\cite{Cirac:95}, and condensed phase analogs thereof~\cite{Brown:2001aa}, are by design physically 1D systems, they can exhibit long-range interactions and hence are not geometrically 1D.} 
The general V$Z$ model is physically motivated by physical systems subject to always-on transverse fields, 
such as superconducting flux-qubit architectures~\cite{You:2007aa,harris_flux_qubit_2010,Yan:2016aa,grover2020fast} which experience a small but always-on $X$-field in a quantum annealing context~\cite{mozgunov2020quantum}.
% that also reduces decoherence (see Appendix~\ref{app:flux}). 
%Other systems with an always-on field along a particular axis are common, e.g., liquid-state nuclear magnetic resonance~\cite{Gershenfeld:97}, or nitrogen-vacancy centers in diamond~\cite{Sar:2012km}.

The outline of this paper is as follows: in Sec.~\ref{sec:review} we review the previous results on the universality and supremacy of 1D models. In Sec.~\ref{sec:model} we define our model and in Sec.~\ref{sec:proof} we demonstrate universality by reconstructing a universal gate set. In Sec.~\ref{sec:supremacy} we apply this universality result to generate distributions known to demonstrate quantum supremacy, and in Sec.~\ref{sec:conclusion} we close with concluding remarks. Additional technical details are provided in the appendix.

\section{Background: universality and supremacy via 1D models} 
\label{sec:review}

\begin{table*}
    \begin{tabular}{|c|c|c|c|c|c|}
    \hline
         Model & Inhomogeneity & Physical Qubits & UQC Runtime Scaling& QS Runtime Scaling& Reference \\
         \hline
         QAOA & Control Unit & $O(nlog(n))$ &  $O(n^5d^5\epsilon^{-4})$ & $O(n^{10}\epsilon^{-4})$ & \cite{lloyd}\\
         Quantum Circuits & - & $4n+2$ & $O(nd)$ & $O(n^2)$& \cite{raussendorf}\\
         V$Z$ Model & $Z$ interactions & $n$ & $O(d)$ & $O(n)$ & This work\\
         Quantum Circuits & $X$,$Z$ interactions & $n$ & $d$& $O(n)$ & \cite{bermejo, haferkamp}\\
         \hline
    \end{tabular}
    \caption{Table of various 1D models of universal quantum computation. Models are compared in terms of their requirements needed to reproduce the output distribution of a 1D circuit of $n$ logical qubits and depth $d$, to within total variation distance $\epsilon$. The UQC and QS columns give the runtime required for universality and supremacy, respectively.}
    \label{tab:1}
\end{table*}

Universal Quantum Computation (UQC) is, informally, the ability to solve any problem that can be solved by any quantum computer, or more formally, the ability to approximate any unitary transformation to arbitrary accuracy in polynomial runtime~\cite{Deutsch:85,DiVincenzo1995}. Quantum Supremacy (QS) is, also informally, the ability to solve problems that cannot be solved in the same amount of time by any classical computer~\cite{Preskill:2012aa}. More formally, QS, is the ability to generate a probability distribution that cannot be efficiently approximated to arbitrary accuracy by any classical computer with access to randomness unless the polynomial hierarchy collapses, which is believed to be unlikely~\cite{aaronson2016,Harrow:2017aa}. A number of problems exist that are known to be solvable in theory by universal quantum computers but not classical computers, so a quantum computer's universality implies its quantum supremacy~\cite{bremner1,farhi}. A brief further discussion of the complexity basis of supremacy is provided in Appendix~\ref{app:complex}.

In recent years a number of alternative models of quantum computation have been proposed, with a variety of dimensionality, circuit depth, and homogeneity requirements needed to achieve universality. Here we restrict to considering models that are universal in 1D, and compare the models on their requirements needed to reproduce a depth $d$ universal gate set (UGS) based quantum circuit, as summarized in Table~\ref{tab:1}. Let us now explain the gain achieved by the V$Z$ model, by contrasting it with the other models featured in this table. 

It is known that 1D gate-based quantum circuits can achieve QS in $O(n)$ depth \cite{bermejo, haferkamp}, so we can use 1D quantum circuits as a reference point to compare the runtime requirements of other universal models in achieving QS. One such model is the Quantum Approximate Optimization Algorithm (QAOA)~\cite{farhi2014quantum} (briefly reviewed in Appendix~\ref{app:notQAOA}) equipped with Broadcast Quantum Cellular Automata (BQCA)~\cite{Simon}, that was shown to be universal~\cite{lloyd,morales}. QAOA defined in 1D consists of a chain of qubits which undergo evolution that alternates between a homogeneous $X$-field and a potentially inhomogeneous $Z$-diagonal Hamiltonian.
 BQCA requires addressed control of a single qubit, the control unit, which it uses to break translational symmetry and reproduce local gates on other qubits in the chain. Using QAOA with BQCA to reproduce the output distribution of a given 1D depth $d$ quantum circuit to within total variation (Kolmogorov) distance $\epsilon$ requires a runtime of $O(n^5 d^5 \epsilon^{-4})$ in the worst case, as shown in Appendix~\ref{app:lloyd}. 

The QAOA model requires only a fixed $Z$-diagonal Hamiltonian repeated 
%across all $p$ individual layers (Def.~\ref{def:1}), 
with different evolution times.
%$\gamma_k$ (see Appendix~\ref{app:notQAOA}). 
Alternatively, if one is capable of implementing each desired gate as an alternating sequence of homogeneous nearest neighbor entangling gates and homogeneous local rotations, using boundary conditions of the underlying architecture to introduce spatial control, universality can be achieved in depth $O(nd)$ via the model of Ref.~\cite{raussendorf}. This model can reproduce the target circuit exactly in the absence of noise, so the time cost is independent of $\epsilon$. Compared with QAOA, this model works by applying a set of more general yet still homogeneous quantum gates.

Resource-wise, the 1D V$Z$ model defined here can be thought of as a midpoint between the homogenous circuit model of \cite{raussendorf} and general 1D quantum circuits, in that it only requires individually addressed control of the $Z$-interactions. Likewise, the asymptotic depth requirement of $O(d)$ to achieve universality is between that of \cite{raussendorf} and the original UGS-based universal quantum circuit being simulated.

While our primary concern is with universality of the V$Z$ model, in Sec.~\ref{sec:supremacy} we also provide an example of a problem not contained within the complexity class BPP, which could be used to demonstrate quantum supremacy in a practical setting, e.g., using trapped ions or flux qubits.

\section{The Varying-$Z$ Model}
\label{sec:model}

We will analyze the V$Z$ model from the perspective of gate layers, rather than individual gates.

\begin{definition}[Gate Layer]
A gate layer is a depth-$1$ operation, equivalent to a set of commuting gates applied in parallel in the circuit model.
\label{def:1}
\end{definition}

The V$Z$ model reproduces a gate layer from a circuit in the gate model using a series of 
\emph{applied} layers: 
gate layers corresponding to the application of a single time-independent Hamiltonian, which are natural to the V$Z$ model. We refer to the gate layer being reproduced from applied layers as the \textit{effective} layer.
In the V$Z$ model all $n$ qubits are initially prepared in the $\ket{+}^{\otimes n}$ state and then acted upon by a series of applied unitary layers. These unitaries are generated by a series of Hamiltonians composed of two terms: a $Z$-diagonal term $H_l^z$ which varies by applied layer $l$, and a constant, homogeneous $X$-field $H^x$ which is independent of the layer. Each applied layer $l$ is applied for time $t_l$. Note that $l$ plays the role of a discrete time index.
We take the $Z$-Hamiltonians to be two-local between neighboring qubits located on the vertices $i\in \mathcal{V}$ of some underlying graph $(\mathcal{V},\mathcal{E})$, and with interactions $w_{ij}$ on the edges $(i,j)\in\mathcal{E}$ that are uniform in magnitude but can be turned on or off by edge. The $l^{\text{th}}$ Hamiltonian may be written as:
\bes
\label{eq:H_l}
\begin{align}
    H_l &= H_l^z+H^x \\
    H^x &= a \sum_{i\in \mathcal{V}} X_i\\ 
    H_l^z &= b_l \sum_{(i,j)\in\mathcal{E}} w_{l,ij} Z_i Z_{j} + c_l \sum_{i\in \mathcal{V}} v_{l,i} Z_i  ,
\end{align}
\ees
where $w_{l,ij} \in \{0,1\}$ and $v_{l,i}\in \{0,1\}$ respectively switch the interactions and local fields on or off for the $l$th applied layer, and $a>0$ is fixed throughout the circuit. The total number of qubits is $n=|\mathcal{V}|$.

\begin{definition}[Varying-$Z$ model]
Starting from the initial state $\ket{\boldsymbol{+}}\equiv\ket{+}^{\otimes n}$, apply each Hamiltonian $H_l$ [Eq.~\eqref{eq:H_l}] for corresponding time $t_l$, measure all the qubits in the $Z$-basis, and sample the final state.
\label{def:TZ}
\end{definition}

The output probability distribution is given by:
\begin{equation}
    P(\mathbf s) = |\langle \mathbf s| \prod_l e^{-it_l H_l}\ket{\boldsymbol{+}}|^2 ,
\end{equation}
where $\mathbf s \in \{0,1\}^n$. 

Our main result is the following:
\begin{mytheorem}
\label{thm:1}
The V$Z$ model can simulate an arbitrary depth $d$ quantum circuit on a graph of maximum degree $\Delta$ to arbitrary accuracy in depth $O(d\Delta)$ on the same underlying graph.
\end{mytheorem}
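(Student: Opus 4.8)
\emph{Proof proposal.} The plan is to construct a universal gate set directly inside the V$Z$ model with $O(1)$ applied-layer cost per gate, and then to reproduce the target circuit layer by layer. Assuming (without loss of generality) that the target circuit is written over a fixed finite universal gate set, I would first put it in canonical form: $d$ effective layers (Definition~\ref{def:1}), each a parallel collection of single-qubit gates from the set together with two-qubit gates supported on a matching of $(\mathcal{V},\mathcal{E})$. It then suffices to show that any one effective layer is reproduced to arbitrary accuracy by $O(\Delta)$ applied layers.

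The key lemma I would establish is that the V$Z$ model natively realizes, to arbitrary accuracy and at $O(1)$ layer cost: (a) the global rotation $e^{-i\beta\sum_i X_i}$ for any $\beta$, simply by evolving with $H^z_l=0$ for time $\beta/a$; and (b) ``fast-pulse'' rotations $e^{-i\theta Z_i}$ and $e^{-i\phi Z_iZ_j}$, obtained by switching on $v_{l,i}$ (resp.\ $w_{l,ij}$) with a large coefficient $c_l$ (resp.\ $b_l$) over a short time $t_l=\theta/c_l$ (resp.\ $\phi/b_l$). The point is that $e^{-it_l(c_lZ_i+H^x)}=e^{-i\theta(Z_i+(a/c_l)\sum_j X_j)}$ differs from $e^{-i\theta Z_i}$ by $O(a\theta/c_l)$, which $\to 0$ as $c_l\to\infty$; hence any fixed-depth composition of these primitives attains arbitrary accuracy, the overhead appearing in the total evolution time and not in the layer count (this is why the depth is $\epsilon$-independent).

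Granting the lemma, the effective layer is assembled in two stages. For the single-qubit stage I would use an Euler decomposition with a \emph{fixed} middle angle, $V\propto e^{-i\alpha Z}\,X(\pi/2)\,e^{-i\beta Z}\,X(\pi/2)\,e^{-i\gamma Z}$: the $X(\pi/2)$ factors are identical and global, applied to the whole register at once, while the diagonal $Z$-rotations commute, so qubits calling for a common angle are switched on together in a single fast-pulse layer --- a bounded number of angles for a fixed gate set. One must only pad the sequence of global $X$ pulses so that a qubit meant to be left untouched sees a net $X$-rotation equal to a multiple of $2\pi$. For the two-qubit stage I would use $CZ_{ij}\propto e^{-i\frac{\pi}{4}Z_i}e^{-i\frac{\pi}{4}Z_j}e^{i\frac{\pi}{4}Z_iZ_j}$ (an arbitrary two-qubit gate being reduced to three $ZZ$-exponentials by the KAK decomposition): the $ZZ$-exponentials on all matching edges are produced in one fast-pulse layer with the appropriate $w_{l,ij}=1$, plus a constant number of local and global-$X$ layers. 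The factor $\Delta$ enters when decoupling each qubit and active edge from the rest of the interaction graph and scheduling these stages consistently with the on/off edge pattern: I would process the interaction pattern of the layer in $O(\Delta)$ rounds (one per edge-colour class, by Vizing's theorem), interleaving refocusing pulses, so that each effective layer costs $O(\Delta)$ applied layers and the whole circuit costs $O(d\Delta)$.

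The main obstacle is the permanent, homogeneous $X$-field, which cannot be switched off: every idle qubit is continually rotating, so one must show that the fast-pulse construction (or an exact interaction-picture variant) still reproduces an arbitrary effective layer with (i) accumulated $X$-induced error below any target $\epsilon$ at only $O(d\Delta)$ layer cost and (ii) spectator qubits returned exactly to the identity despite the global pulses. Making the decoupling and scheduling cost $O(\Delta)$ rather than $O(n)$ per layer --- and in particular pinning down why $\Delta$ is the right overhead on a general graph --- is the step I expect to require the most care.
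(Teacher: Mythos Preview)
Your proposal is correct and proves the theorem as stated, but it takes a genuinely different route from the paper. Your core device is the \emph{fast-pulse} (impulsive) limit: by letting $c_l,b_l\to\infty$ with $t_l\to 0$, the always-on $X$-field contributes only $O(a/c_l)$ error per pulse, so you recover pure $Z$- and $ZZ$-rotations to arbitrary accuracy, and then a standard $Z$--$X(\pi/2)$--$Z$--$X(\pi/2)$--$Z$ decomposition with global fixed-angle $X$ pulses handles single-qubit layers. The paper instead constructs \emph{exact} gates with bounded, finite $c_l,b_l$: Lemma~\ref{lem:single} writes a single-qubit layer as $V U V^\dagger$ with $t=\pi/a$ chosen so that idle qubits complete exactly a full $X$-rotation while active qubits undergo a $(\pi{+}\gamma)$-rotation about a tilted axis, and Lemma~\ref{lem:couple} uses an exact two-qubit Cartan-type identity [Eq.~\eqref{eq:2qubitdecomp}] together with the sinc condition~\eqref{eq:sincs-4} to extract a pure $ZZ$-rotation. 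Both arguments then invoke edge-colouring (Vizing) identically to obtain the $O(\Delta)$ per-layer factor.

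What each buys: your argument is shorter and conceptually standard, and its depth is indeed $\epsilon$-independent since only the field amplitude, not the layer count, grows with accuracy. The paper's construction is the harder and more physically relevant one: it shows that the constant $X$-field can be absorbed \emph{without} assuming arbitrarily large $Z$-couplings or vanishing pulse durations, which is precisely the regime the model is meant to capture (and is why Appendix~\ref{app:Euler} explicitly rejects the naive Euler route when fields are comparable). Your reference to ``interleaving refocusing pulses'' is unnecessary here---the $ZZ$ couplings are switchable in the V$Z$ model, so no dynamical decoupling is needed---and the step you flag as most delicate (the $\Delta$ overhead) is in fact the easy part shared by both proofs.
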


Notably, any circuit on a graph of bounded degree, e.g., a 1D chain with $\Delta=2$, may be simulated by the V$Z$ model in depth $O(d)$. Given the universality of 1D quantum circuits, this has the immediate consequence:

\begin{mycorollary}
\label{cor:1D}
The 1D V$Z$ model is quantum computationally universal.
\end{mycorollary}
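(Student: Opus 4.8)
The plan is to deduce Corollary~\ref{cor:1D} directly from Theorem~\ref{thm:1}, and then---since that is where all the work sits---to also sketch how Theorem~\ref{thm:1} itself would be proved. A 1D nearest-neighbor chain has maximum degree $\Delta=2$, so Theorem~\ref{thm:1} says the V$Z$ model reproduces any depth-$d$ 1D quantum circuit to arbitrary accuracy in depth $O(d)$ on the same chain. Since 1D nearest-neighbor circuits over a fixed universal gate set already form a model of universal quantum computation (the sense in which 1D quantum circuits are ``universal'' in Refs.~\cite{raussendorf,bermejo,haferkamp}), and since approximating the target unitary in operator norm to accuracy $\delta$ forces the $Z$-basis output distributions to agree to total-variation distance $O(\delta)$, the V$Z$ model inherits universality. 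So I would concentrate on Theorem~\ref{thm:1}.

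To prove Theorem~\ref{thm:1} I would exhibit a universal gate set in which each element is realized by $O(1)$ applied layers of the form~\eqref{eq:H_l} that act as the desired gate tensored with the identity on the remaining qubits. The building blocks are: (i) with all $w_{l,ij}=v_{l,i}=0$, evolving $H_l=a\sum_i X_i$ for time $t_l$ is a simultaneous $X$-rotation by angle $2at_l$ on every qubit, while with all $v_{l,i}=1$ and no couplings one gets a global rotation about the tilted axis $\propto(a,0,c_l)$, which I would tune to a global Hadamard so as to convert the initial $\ket{+}^{\otimes n}$ into $\ket{0}^{\otimes n}$; (ii) with only $v_{l,i}$ on, qubit $i$ rotates about $\propto(a,0,c_l)$ while every other qubit rotates only about $X$, and since rotations about two non-collinear axes generate $SU(2)$, an Euler-type product of $O(1)$ such layers (together with one free layer) realizes an arbitrary single-qubit unitary on $i$, provided the total $X$-angle accumulated by every spectator is chosen to be a multiple of $2\pi$; (iii) with a single edge coupling $w_{l,ij}$ on and all fields off, qubits $i,j$ evolve under the two-qubit transverse-field Ising Hamiltonian $b_l Z_i Z_j + a(X_i+X_j)$ while the rest feel only the $X$-field, and this evolution is entangling for generic parameters, so conjugating it by single-qubit layers of type (ii) produces, say, a CZ gate on $(i,j)$ with the spectators again refocused. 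Disjoint edges can be driven at once, so a whole matching of two-qubit gates costs $O(1)$ applied layers.

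The depth bound then follows: within one circuit layer the single-qubit gates (finitely many distinct ones, from the fixed gate set) are batched into $O(1)$ applied layers, and the two-qubit gates---whose edges need not form a matching when the gates commute---are edge-colored into $O(\Delta)$ matchings by Vizing's theorem, each matching handled by $O(1)$ applied layers, giving $O(\Delta)$ applied layers per circuit layer and $O(d\Delta)$ overall; in 1D this is $O(d)$. Arbitrary accuracy comes essentially for free because $b_l,c_l,t_l$ are continuous, so each of the $O(d\Delta)$ synthesized gates can be made accurate enough that the operator-norm errors, which add, sum below any prescribed $\epsilon$.

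The main obstacle is exactly the feature that makes the model interesting: the transverse field $H^x=a\sum_i X_i$ is homogeneous, of fixed strength, and can never be switched off, so every applied layer unavoidably rotates every qubit about $X$---spectators during a single-qubit gate, and both spectators and the active qubits during the multi-layer two-qubit gadget. The heart of a rigorous proof is therefore the refocusing bookkeeping: showing that for each gate in the set the layer durations and a bounded number of extra free layers can be chosen so that all unwanted $X$-rotations close up to integer multiples of $2\pi$ while the intended rotations on the target qubit(s) still compose to the correct unitary. The second place where real work is needed is verifying that the two-qubit transverse-field Ising evolution is genuinely entangling for the chosen parameters (rather than accidentally a product of local rotations); everything else is composition and standard single-qubit synthesis.
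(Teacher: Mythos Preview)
Your reduction of Corollary~\ref{cor:1D} to Theorem~\ref{thm:1} is correct and matches the paper exactly: a 1D chain has $\Delta=2$, so $O(d\Delta)=O(d)$, and universality of 1D nearest-neighbor circuits does the rest. Your overall plan for Theorem~\ref{thm:1}---realize single-qubit layers and pairwise $ZZ$-coupling layers each in $O(1)$ applied layers, then edge-color the two-qubit gates into $O(\Delta)$ matchings via Vizing---also mirrors the paper's structure (Lemmas~\ref{lem:single}, \ref{lem:couple}, and Sec.~\ref{sec:depth-req}).

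The genuine gap is in the single-qubit construction. You propose layers in which only the active qubit has its $Z$-field on and spectators are refocused by making their accumulated $X$-angle a multiple of $2\pi$, assembled ``Euler-type.'' You correctly flag this refocusing as the crux but do not carry it out, and the paper in fact devotes Appendix~\ref{app:Euler} to explaining why the natural Euler construction fails: the shared layer duration couples the spectators' $X$-angle to the active qubit's rotation in a way that over-determines the system. The paper's actual construction (Lemma~\ref{lem:single}) is \emph{not} of your form. It writes $G_L = V^{\otimes n}\,U\,(V^\dag)^{\otimes n}$ where $V,V^\dag$ are applied \emph{homogeneously to every qubit} (so they cancel on spectators automatically), and the middle layer has duration exactly $t=\pi/a$ so that spectators with $v_i=0$ see $e^{-i\pi X}=-I$ while active qubits with $v_i=1$ rotate by $\pi+\gamma$ about a tilted axis [Eq.~\eqref{eq:pipigamma}]; the conjugation by $V$ then rotates that axis onto the target $\vec r$. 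The $\pi$-offset is the non-obvious trick that decouples ``identity on spectators'' from ``arbitrary gate on target,'' and it is missing from your sketch.

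The two-qubit step is also more delicate than ``generically entangling, so conjugate to CZ.'' The paper derives an explicit Cartan-type decomposition of $e^{-it(a(X_i+X_j)+bZ_iZ_j)}$ into $e^{-i\beta(X_i+X_j)}e^{-iD_2Y_iY_j}e^{-iD_3Z_iZ_j}e^{-i\beta(X_i+X_j)}$ (Appendix~\ref{app:decomp}), sets $D_2=0$, and is left with a transcendental $\mathrm{sinc}$ equation for $t$ [Eq.~\eqref{eq:sincs-4}] whose solvability for every target coupling $C\in[0,\pi]$ is established only numerically (Appendix~\ref{app:numeric}); undoing the residual $\beta$-rotations and refocusing uncoupled qubits costs up to three further applied layers. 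Your intuition that the bookkeeping is the heart of the proof is right---the paper's contribution is precisely the two explicit constructions that make that bookkeeping close.
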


Note that if $t_l$ or $a$ had a sufficient degree of inhomogeneity (i.e., dependence on the qubit index $i$), or if $a$ were allowed to vary by layer and vanish, we would have sufficient control to directly construct the single qubit UGS $\{W, T\}$, where $W=ie^{-i\frac{\pi}{2\sqrt{2}}(X+Z)}$ is the Hadamard gate and $T=e^{-i\frac{\pi}{8}Z}$ is the $T$-gate. However, Theorem~\ref{thm:1} shows that on graphs of bounded degree and perhaps in general, neither of these relaxations provides a benefit over the already asymptotically optimal depth scaling. In other words, replacing $X$-field control with a constant, always-on transverse field $H^x$ is sufficient for low-overhead universality, as long as the $Z$-diagonal Hamiltonian can be updated between successive applied gate layers. This is clearly a significant  simplification in terms of control requirements over the standard UGS approach.

Note that the initial state $\ket{\boldsymbol{+}}$ is the ground state of $-H^x$, so it can be prepared by turning on this Hamiltonian and waiting for the system to relax into its ground state. It can also be prepared starting from the $\ket{0}^{\otimes n}$ state and applying the global Hadamard gate $W^{\otimes n}$, which is compatible with the V$Z$ model since it requires no inhomogeneity of the $X$-field.

We remark that the V$Z$ model resembles QAOA~\cite{farhi2014quantum} to some extent. The main differences are 
the fact that in the V$Z$ model the $X$-field is always on (whereas in QAOA one alternates between $H_l^z$ and $H^x$), and 
that in V$Z$ model we assume that the $b$ and $c$ coefficients are $l$-dependent, whereas in QAOA they may vary by qubit but not by $l$ (see also Appendix~\ref{app:notQAOA}). 

\textit{Proof outline of Theorem~\ref{thm:1}.}
Two-qubit gates are universal for quantum computation~\cite{DiVincenzo1995}, and an arbitrary two-qubit gate can be produced with a constant number of single qubit unitaries and gates generated by $ZZ$ interactions~\cite{Barenco1995,nielsen2010quantum}, which we refer to as $ZZ$-gates. Thus, in order to reproduce an arbitrary quantum circuit, it is sufficient to demonstrate the ability to generate arbitrary single qubit unitaries and $ZZ$-gates. We will first demonstrate the ability to reproduce layers corresponding to arbitrary single qubit unitaries (Lemma~\ref{lem:single}) and then corresponding to $ZZ$-gates with arbitrary real coupling constants (Lemma~\ref{lem:couple}). The technical challenge in proving these results is to deal with the fact the $X$-field is always on. 

Each type of effective gate layer $G_L$ can be implemented using $l_{\max}(L)\in O(1)$ applied layers of the V$Z$ model, corresponding to the decomposition
\begin{equation}
    G_L = \prod_{l=1}^{l_{\max}(L)} e^{-it_l(H^x + H_l^z)} .
\end{equation}

As each effective layer can apply gates across all qubits in parallel, we will then analyze the decomposition of an arbitrary circuit into layers based on its UGS, and conclude that the circuit can be implemented in the V$Z$ model with depth overhead proportional to the number of gates that may act on the same qubit within a single gate layer.

\section{Proof of universality of the V$Z$ model}
\label{sec:proof}

In this section we provide a detailed proof of Theorem~\ref{thm:1}. 

\subsection{Single-qubit gate layers}
Consider an effective gate layer $G_L$ corresponding to identical arbitrary single qubit unitaries $g_i$, $i\in\mathcal{V}$. Let $g_i$ apply a rotation by some angle $\gamma$ about some axis $\vec r = (\sin(2\theta)\cos(2\phi),\ \sin(2\theta)\sin(2\phi),\ \cos(2\theta))$ of the Bloch sphere. This layer may then be decomposed as
\beq
\label{eq:logicalsinglelayer}
   G_L  = \bigotimes_{i} (g_i)^{v_i}= e^{-i\gamma \sum_i v_i \vec r \cdot \vec \sigma_i} \ , \quad v_i \in \{0,1\} 
\eeq
for $\vec \sigma_i = (X_i, Y_i, Z_i)$, and $v_i = 1$ iff unitary $g_i$ is applied to qubit $i$. How would we implement $G_L$ using just the components of the V$Z$ model? It would appear that a simple Euler angles construction should suffice, but we explain in Appendix~\ref{app:Euler} why this approach fails. Instead, the following lemma provides the answer: 

\begin{mylemma}\label{lem:single}
The V$Z$ model can implement an arbitrary effective single-qubit gate layer $G_L$ [Eq.~\eqref{eq:logicalsinglelayer}] in three applied layers.
\end{mylemma}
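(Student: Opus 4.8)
The plan is to implement the effective layer $G_L = e^{-i\gamma\sum_i v_i\, \vec r\cdot\vec\sigma_i}$ by sandwiching a purely $Z$-diagonal rotation between two ``conjugating'' applied layers that rotate the always-on $X$-axis into the desired axis $\vec r$. Concretely, I would look for applied layers of the form $U_{\pm} = e^{-it_{\pm}(H^x + H^z_{\pm})}$ and a middle layer $U_0 = e^{-it_0(H^x + H^z_0)}$ such that
\begin{equation}
    U_- \, U_0 \, U_+ = \bigotimes_i (g_i)^{v_i}.
\end{equation}
Since all three unitaries factorize over qubits (the single-qubit layer has no $ZZ$ terms, so $w_{l,ij}=0$ and each $H_l$ is a sum of commuting single-qubit terms $a X_i + c_l v_i Z_i$), it suffices to solve the problem one qubit at a time: on a qubit with $v_i=1$ we need $u_- u_0 u_+ = g = e^{-i\gamma \vec r\cdot\vec\sigma}$, while on a qubit with $v_i=0$ the local $Z$-field is off and we need $u_- u_0 u_+ = e^{-i(t_-+t_0+t_+)aX}$ to equal the identity, which fixes a global timing constraint $t_-+t_0+t_+ \in \frac{\pi}{a}\mathbb{Z}$ (or $\frac{2\pi}{a}\mathbb{Z}$ up to an irrelevant global phase). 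So the real content is a statement about $SU(2)$: any single-qubit unitary can be written as a product of three rotations, the outer two about tunable axes lying in the $XZ$-plane but each having a fixed-length $X$-component $a t_{\pm}$, and the middle one likewise, with the additional ``parked-qubit'' constraint on the total time.

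The key steps, in order: (i) reduce to the single-qubit $SU(2)$ problem as above and record the parked-qubit timing constraint; (ii) parametrize each applied single-qubit unitary as a rotation by angle $2\alpha_l = t_l\sqrt{a^2 + (c_l v_i)^2}$ about the axis $(\,a,\,0,\,c_l\,)/\sqrt{a^2+c_l^2}$ in the $XZ$-plane — note the $X$-component of the rotation generator is always $a t_l$, it cannot be switched off; (iii) exhibit an explicit choice of the three pairs $(t_l, c_l)$ realizing the target $g$ — I expect to do this by first using the $U_+$ layer to rotate the state/axis appropriately, then $U_0$ (with a suitably chosen $Z$-field) to perform the ``bulk'' of the rotation $\gamma$ about $\vec r$, then $U_-$ to undo the first conjugation, all while the accumulated $X$-rotation angles are arranged to cancel on the parked qubits; (iv) verify that the same three applied layers act as the identity on every $v_i=0$ qubit, using the timing constraint from step (i); (v) count layers: three, as claimed, independent of $n$ and of which qubits are active.

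I expect the main obstacle to be step (iii): because the $X$-field is always on, one cannot independently choose the rotation angle and the rotation axis of each applied layer — increasing $|c_l|$ to tilt the axis toward $Z$ simultaneously changes the rotation angle, and the $X$-component $at_l$ of the generator is locked to the layer duration. The conjugation trick (``rotate axis in, rotate about new axis, rotate axis out'') that works trivially when one has free $X$-rotations must here be reworked so that the inner layer's axis and angle are mutually consistent and the outer two layers' $X$-contributions cancel on parked qubits; showing that the resulting system of transcendental equations in $(t_l,c_l)$ always admits a solution (for every $\gamma,\theta,\phi$, possibly after allowing $\gamma \to \gamma + 2\pi k$ or choosing the ``long way around'' on some rotation) is the crux. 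This is presumably why the naive Euler-angle construction fails, as the paper notes in Appendix~\ref{app:Euler}: Euler angles would need three rotations about two fixed orthogonal axes, but here every axis carries an unavoidable $X$-component, so one must instead use three rotations about axes in the $XZ$-plane and carefully bookkeep the $X$-overhead.
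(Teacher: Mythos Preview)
Your conjugation strategy $U_-U_0U_+$ is the same approach the paper takes, and you correctly anticipate the ``long way around'' trick (the paper offsets the middle rotation to angle $\pi+\gamma$ so that the choice $at_0=\pi$ gives $e^{-i\pi X}=-I$ on parked qubits while still leaving room for a nonzero $Z$-field on active ones). But you make the problem harder than necessary by assuming the local $Z$-field in \emph{all three} applied layers follows the target pattern $v_i$. That forces the global timing constraint $a(t_++t_0+t_-)\in\pi\mathbb{Z}$ and couples the outer-layer parameters to the parked-qubit condition. The paper instead sets $v_{l,i}=1$ for \emph{every} qubit in the two outer layers, so that $V=e^{-it'(aX+c'Z)}$ and $V^\dag$ (implemented modulo~$\pi$) are literally the same single-qubit unitary on every site and cancel on parked qubits automatically, for any $(t',c')$. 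The parked-qubit condition then lives solely in the middle layer, where $at=\pi$ suffices, and the outer-layer parameters are completely free to solve the purely geometric problem of rotating the middle layer's $XZ$-plane axis onto $\vec r$; the paper carries this out explicitly (Eq.~\eqref{eq:psialpha'} and Appendix~\ref{app:sol}) and obtains closed-form $(t,c,t',c')$. Your version is not wrong in principle, but you stop at ``showing the system always admits a solution is the crux'' without actually doing it, and the extra coupling you introduce is avoidable within the V$Z$ model since the switches $v_{l,i}$ may be chosen per layer independently of the target pattern.
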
 

\begin{proof}
We will show that $G_L$ may be decomposed into a product of three applied unitary layers as

\be
\label{eq:VUV}
    G_L = V^{\otimes n} \bigotimes_{i=1}^n U_i \ (V^{\dag})^ {\otimes n} = \bigotimes_{i\in \mcV} V_i U_i V^\dag_i ,
    \ee
with
\begin{subequations}
\label{eq:UiV}	 
\begin{align}
\label{eq:Ui}
    U_i &= e^{-it(a X_i + c v_i Z_i)}\\ 
 \label{eq:V}
 V_i &= e^{-it'(aX_i+c' Z_i)} .
\end{align}
 \end{subequations}    
$V^\dag$ may be implemented modulo $\pi$, and does not require changing the $X$-field strength $a$ (we suppress the $i$ subscript where convenient). The resulting effective and applied layers are depicted in Fig.~\ref{fig:singlegatelayer}.

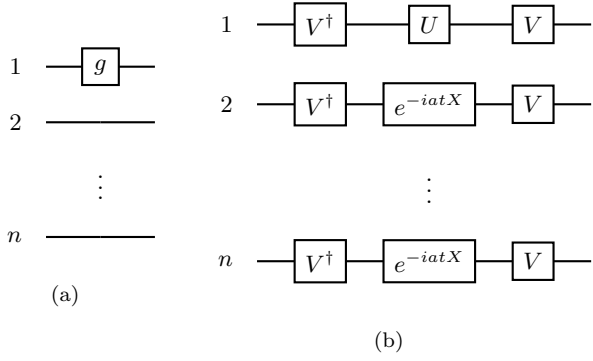
\begin{figure}
\subfigure[\ ]{\begin{quantikz}
    \lstick{1}\hspace{2mm} &\gate{g}&\qw\\ 
    \lstick{2}\hspace{2mm} &\qw&\qw\\
    &\vdots\\
    \lstick{$n$}\hspace{2mm} &\qw&\qw\\
\end{quantikz}}\hspace{5mm}
\subfigure[\ ]{\begin{quantikz}
    \lstick{1}\hspace{2mm} &\gate{V^\dag}&\gate{U}&\gate{V}&\qw\\
    \lstick{2}\hspace{2mm} &\gate{V^\dag}&\gate{e^{-iatX}}&\gate{V}&\qw\\
    &&\vdots\\
    \lstick{$n$}\hspace{2mm} &\gate{V^\dag}&\gate{e^{-iatX}}&\gate{V}&\qw\\
\end{quantikz}}
\caption{a) Example effective single-qubit gate layer $G_L$ in 1D, with a gate $g$ acting on only the first qubit. b) Implementation as applied layers in the V$Z$ model. The time $t$ is chosen to make the effect on qubits $\{2,\dots,n\}$ the identity gate.}
\label{fig:singlegatelayer}
\end{figure}

Intuitively, we would like $U_i$ to implement a Bloch sphere rotation by $\gamma$ for qubits $i$ with $v_i=1$ and by $0$ otherwise, up to equivalence modulo $\pi$. $V$ and $V^\dag$ effectively rotate the rotational axis of $U_i$ to point along the Bloch vector $\vec r$.

$U_i$ applies a rotation of magnitude $\sqrt{a^2+(cv_i)^2}t$, so we can construct the desired $U_i$ by solving for $c$ and $t$ such that
\begin{subequations}\label{eq:pipigamma}
\begin{align}
\label{eq:pipigamma1}
at &= \pi \hspace{.5cm} &v_i=0 \\ 
\label{eq:pipigamma2}
 \sqrt{a^2+c^2}t &= \pi+\gamma \hspace{.5cm} &v_i=1
\end{align}
\end{subequations}
The offset by $\pi$ ensures the system of equations is solvable for nonzero $t$ and real $c$. As it can be factored out as 
\begin{align}
    e^{-i (\pi+\gamma) \vec r \cdot \vec \sigma} = e^{-i \pi \vec r \cdot \vec \sigma} e^{-i \gamma \vec r \cdot \vec \sigma} = -e^{-i \gamma \vec r \cdot \vec \sigma}
\end{align} for unit vector $\vec r$, this offset's only effect on the dynamics is an overall phase. This system of equations is solved by $t=\pi/a$, $c = (a/\pi)\sqrt{(\pi+\gamma)^2-\pi^2}$ (recall that in the V$Z$ model $a$ is given and fixed). For qubits with $v_i=0$, this choice amounts to $U_i = e^{-i \pi X_i}$ and thus no net rotation. For qubits with $v_i=1$ the resulting action of $U_i$ is 

\bes
\begin{align}\label{eq:mid_rot}  
    U &= e^{-it(a X + c Z)} \\
    &= e^{-i(\pi+\gamma) (\sin(2\alpha)X+\cos(2\alpha)Z)}   
\end{align}
\ees
where $\alpha = \frac{1}{2}\cos^{-1}(\frac{c}{\sqrt{a^2+c^2}})$ [Fig.~\ref{fig:rot2_a}]. Thus the axis of rotation makes an angle $2\alpha$ with the $Z$-axis. 

\begin{figure*}
\subfigure[\ ]{\includegraphics[scale=.4]{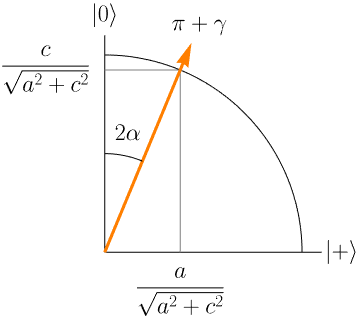}\label{fig:rot2_a}}
\subfigure[\ ]{\includegraphics[scale=.4]{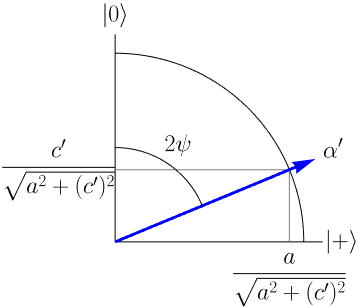}\label{fig:rot2_b}}
\subfigure[\ ]{\includegraphics[scale=.4]{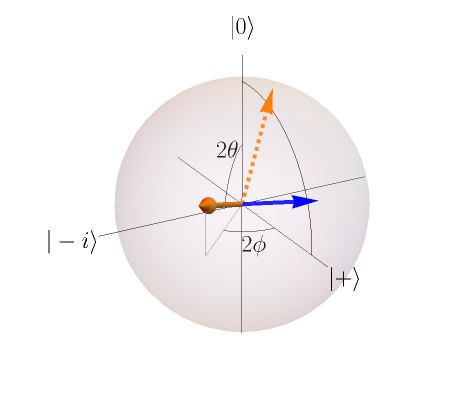}\label{fig:rot2_c}}
    \caption{(a) $U_i$ depicted by its rotational axis in the $X$-$Z$ plane of the Bloch sphere. Here $c$ is picked to make the magnitude of rotation $\pi+\gamma$ for qubits with $v_i=1$, which results in the rotational axis making an angle $2\alpha$ with the $\ket{0}$ state. (b) The unitary $V$ has $t'$ and $c'$ such that the magnitude of rotation is $\alpha'$ and the axis makes an angle $2\psi$ with the $|0\rangle$ state for all qubits. (c) The net result of all three rotations $VU_iV^\dag$ is that the rotational axis of $U_i$ (dashed orange) is rotated by $V$ (blue) to point along $\vec r$ (solid orange) with spherical coordinates ($2\theta$,$2\phi$). The net rotation is of magnitude $\pi+\gamma$ about the axis $\vec r$.}
    \label{fig:rot}
\end{figure*}

When solving for $V$ we can ignore the $v_i=0$ case, as $VV^\dag=I$. We choose $c'$, $t'$ in Eq.~\eqref{eq:V} for $V$ to implement a rotation [Fig.~\ref{fig:rot2_b}]:
\bes
\begin{align}\label{eq:side_rot}  
    V &= e^{-it'(aX+c'Z)}\\
    &= e^{-i\alpha'(\sin(2\psi) X+\cos(2\psi)Z)}  
\end{align}
\ees
in order to effectively rotate the $\sin(2\alpha)X+\cos(2\alpha)Z$ axis into the desired axis $\vec r$ [Fig.~\ref{fig:rot2_c}]. The necessary $\alpha'$ and $\psi$ are solved for in Appendix \ref{app:sol}. They are:
\begin{subequations}
\label{eq:psialpha'}
\begin{align}
\label{eq:psialpha'-1}
    &\psi = \frac{1}{2}\tan^{-1}\left(\frac{\cos(2\alpha)-\cos(2\theta)}{\sin(2\theta)\cos(2\phi)-\sin(2\alpha)}\right)\\
\label{eq:psialpha'-2}
    &\alpha' = \frac{1}{2}\sin^{-1}\left(\frac{\sin(2\theta)\sin(2\phi)}{\sin(2\psi - 2\alpha)}\right) ,
\end{align}
\end{subequations}
where we take $\psi \in [0,\frac{\pi}{2}]$ and $\alpha' \in [\frac{\pi}{4},\frac{3\pi}{4}]$. From here it is straightforward to solve for $c'$, $t'$.

The resulting values of $c$, $t$, $c'$, $t'$, in Eq.~\eqref{eq:UiV} are
\begin{subequations}
\begin{align}
\label{eq:solt}
    t &= \frac{\pi}{a}\\
    \label{eq:solc}
    c &= \frac{a}{\pi}\sqrt{(\pi+\gamma)^2-\pi^2}\\
    \label{eq:solt'}
    t' &= \frac{\alpha'\sin(2\psi)}{a}\\
    \label{eq:solc'}
    c'&= a \cot(2\psi)
\end{align}
\end{subequations}
for $\psi$ and $\alpha'$ in Eq.~\eqref{eq:psialpha'}. Substituting these values into Eq.~\eqref{eq:VUV} yields the desired $G_L$.
\end{proof}

We note that in the case that $\phi = 0$, Eq.~\eqref{eq:psialpha'} reduces to $\psi=(\theta+\alpha)/2$ and $\alpha' = \pi/2$. This case creates the $T$ gate when ($\theta$,$\gamma$) = ($0$,$\frac{\pi}{8}$) and Hadamard gate when ($\theta$,$\gamma$) = ($\frac{\pi}{8}$,$\frac{\pi}{2}$) so it is in fact sufficient to construct a universal set of single-qubit gates. The benefit of having found a way to represent general single-qubit unitaries is theoretical completeness and the potential simplicity of other UGSs.

\subsection{$ZZ$ coupling layers}
\label{subsection:3.2}

Now consider an effective two-qubit coupling layer $G_L$ corresponding to $ZZ$-coupling gates $g_{ij}= e^{-iC Z_i Z_j}$ with an arbitrary real coupling constant $C>0$, and acting on some but not all pairs of qubits connected by edges in $\mathcal{E}$. Such a layer decomposes as
\bes
\label{eq:logicalcouplayer}
\begin{align}
    G_L &= \prod_{(i,j)\in \mathcal{E}} (g_{ij})^{w_{ij}} \hspace{.25in} w_{ij}\in \{0,1\}\\
    &= e^{-iC\sum_{(i,j)\in \mathcal{E}}w_{ij}Z_i Z_j}  ,
\end{align}
\ees
where $w_{ij} = w_{ji} =1$ iff qubit $i$ couples to qubit $j$ and $0$ otherwise. Again the question arises, how would we implement $G_L$ using just the components of the V$Z$ model?

We restrict to the case where each qubit experiences at most a single two-qubit gate (and no $>2$-qubit gates) in a single timestep (we relax this restriction in Sec.~\ref{sec:depth-req}). I.e., we assume that each qubit couples to at most one of its neighbors at a time: 
\bes
\label{eq:w-cond}
\begin{align}
 \sum_j w_{ij} &= \sum_i w_{ij} \in \{0,1\}\\ 
 w_{ij} &= 0 \text{ for }(i,j) \notin \mathcal{E} 
\end{align}
\ees
In this case we have the following lemma:

\begin{mylemma}
\label{lem:couple}
The V$Z$ model can implement a two-qubit coupling gate layer $G_L$ [Eq.~\eqref{eq:logicalcouplayer}] in at most 6 applied layers.
\end{mylemma}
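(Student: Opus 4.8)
<br>

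\emph{Proof proposal.} The plan is to recast the problem as a disguised single-qubit synthesis and invoke Lemma~\ref{lem:single}. By the disjointness hypothesis~\eqref{eq:w-cond}, the active edges of the effective layer~\eqref{eq:logicalcouplayer} form a partial perfect matching $M$. If in each applied layer we switch on the $ZZ$ coupling only on the edges of $M$ and use no local $Z$-fields on the matched qubits, then every applied Hamiltonian, $H_l = a\sum_k X_k + b_l\sum_{(i,j)\in M}Z_iZ_j$, is a sum of commuting terms supported on the disjoint pairs of $M$ together with single-site terms on the remaining ``spectator'' qubits. Hence $\prod_l e^{-it_lH_l}$ factorizes over pairs and spectators, all sharing the \emph{same} times $t_l$, so it suffices to find one schedule $\{(t_l,b_l)\}$ that acts as $e^{-iCZ_iZ_j}$ on every matched pair and as $\pm I$ on every spectator (an overall phase is irrelevant).

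For a single matched pair I would exploit the $i\!\leftrightarrow\! j$ permutation symmetry of $a(X_i+X_j)+bZ_iZ_j$. Writing $\ket{t_0}=(\ket{01}+\ket{10})/\sqrt2$, $\ket{s}=(\ket{01}-\ket{10})/\sqrt2$ and $\ket{\pm}=(\ket{00}\pm\ket{11})/\sqrt2$, this Hamiltonian leaves invariant the two-dimensional ``active'' block $\mathrm{span}\{\ket{+},\ket{t_0}\}$, on which it acts as $2aX'+bZ'$ (with $Z'$ equal to $+1$ on $\ket{+}$ and $-1$ on $\ket{t_0}$, and $X'$ the flip between them), as well as the one-dimensional blocks $\ket{-}$ and $\ket{s}$, on which it acts as the scalars $+b$ and $-b$. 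The target $e^{-iCZ_iZ_j}$ equals $e^{-iCZ'}$ on the active block and $e^{-iC}$, $e^{+iC}$ on the $\ket{-}$, $\ket{s}$ blocks. The crucial observation is that $2aX'+bZ'$ is exactly the single-qubit Hamiltonian $aX+cZ$ of Lemma~\ref{lem:single} with the fixed field $a$ replaced by the still-fixed field $2a$ and the free coefficient $c$ replaced by the free coefficient $b$; moreover $e^{-it_l(a(X_i+X_j)+b_lZ_iZ_j)}$ is itself a legitimate V$Z$ applied layer. Running the three-layer $VUV^\dag$ construction of Lemma~\ref{lem:single} \emph{inside} this active block, specialized to the rotation about $Z'$ by angle $C$ (the $\theta=0$ case of Eq.~\eqref{eq:logicalsinglelayer}), thus realizes the active block exactly, up to an overall sign, in three applied layers.

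After these three layers the blocks $\ket{-}$, $\ket{s}$ have picked up phases $e^{\mp i\Phi_0}$ with $\Phi_0=\sum_l t_lb_l$, and each spectator has undergone $e^{-ia(\sum_l t_l)X}$; neither is generically what we want. I would append a small number of \emph{refocusing} layers $e^{-it(a(X_i+X_j)+bZ_iZ_j)}$ with $t\sqrt{4a^2+b^2}\in\{\pi,2\pi\}$, each of which acts as $\pm I$ on the active block (a complete Bloch rotation) and so leaves it untouched up to sign, while it shifts the running totals $(\sum_l t_l,\ \sum_l t_lb_l)$ by $(\tfrac{\pi}{2a}\cos\beta,\ \pi\sin\beta)$ or its double, with $\beta=\arctan(b/2a)\in(-\tfrac\pi2,\tfrac\pi2)$ free. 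A short reachability argument then shows that at most three such layers can be chosen so that $\sum_l t_lb_l\equiv C\pmod\pi$ (fixing the $\ket{-}$, $\ket{s}$ phases), $a\sum_l t_l\equiv0\pmod\pi$ (fixing the spectators to $\pm I$), and the leftover overall signs are mutually consistent, making the net two-qubit operator exactly $e^{-iCZ_iZ_j}$. The total is at most $3+3=6$ applied layers.

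I expect the main obstacle to be the bookkeeping in the last step: one must carefully track the overall phases produced by the $\pi$-offset and the ``modulo $\pi$'' realization of $V^\dag$ in Lemma~\ref{lem:single} on each of the three invariant blocks, and verify that the restricted one-parameter families available to the refocusing layers are rich enough to meet both scalar side-conditions together with the phase-consistency condition. The symmetry reduction and the identification of $2aX'+bZ'$ with the Lemma~\ref{lem:single} Hamiltonian are routine by direct computation.
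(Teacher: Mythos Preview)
Your approach is genuinely different from the paper's and, with one gap filled in, does work.

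The paper does not reduce the coupled pairs to a virtual single qubit at all. Instead it establishes (via the magic basis) a closed-form Cartan-type identity
\[
e^{-it(a(X_i+X_j)+bZ_iZ_j)}=e^{-i\beta(X_i+X_j)}\,e^{-iD_2Y_iY_j}\,e^{-iD_3Z_iZ_j}\,e^{-i\beta(X_i+X_j)},
\]
with explicit formulas for $\beta,D_2,D_3$. Setting $D_2=0$ yields a transcendental $\operatorname{sinc}$ equation in $t$, shown numerically to admit a solution with $D_3\equiv C\pmod\pi$ for every $C$. Because this single tuned layer is already (up to the outer $X$-rotations) a pure $ZZ$-rotation, it acts correctly on \emph{all four} symmetry sectors $\ket{+},\ket{t_0},\ket{-},\ket{s}$ simultaneously; the two outer $e^{-i\beta(X_i+X_j)}$ are cancelled by two pure-$X$ applied layers $e^{-it'H^x}$, and a final Lemma~\ref{lem:single} correction on $\mathcal S$ removes the spectators' residual $X$-rotation. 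The count is $1+2+3=6$, dropping to $3$ when $\mathcal S=\emptyset$.

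Your symmetry reduction to a virtual qubit plus Lemma~\ref{lem:single} is more elementary---no magic basis, no Cartan decomposition, no numerical $\operatorname{sinc}$ solve---but it trades those for the refocusing step, which is the actual content of your argument and which you have only asserted. After the three Lemma~\ref{lem:single} layers, neither $\Phi_0=\sum_l t_lb_l$ nor $aT_0$ generically lands in the required residue class, and the active-block sign is entangled with the parities $m_r$ of the refocusing pulses. What you need is that the map $(\beta_r,m_r)_r\mapsto\bigl(\tfrac\pi2\sum_r m_r\cos\beta_r,\ \pi\sum_r m_r\sin\beta_r,\ \prod_r(-1)^{m_r}\bigr)$ surjects onto $(\mathbb R/\pi\mathbb Z)^2\times\{\pm1\}$ in at most three terms. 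This is true---e.g.\ two pulses with $m=1$ and opposite $\beta$'s shift $aT$ by any value in $(0,\pi]$ while leaving $\Phi$ and the sign fixed, and a third pulse with free $\beta$ and $m\in\{1,2\}$ then sets $\Phi$ and the sign---but it is not a throwaway remark and should be written out. Note also that even when $\mathcal S=\emptyset$ your construction still needs at least one refocusing layer to align the $\ket{-},\ket{s}$ phases with the active block, so you obtain four applied layers there rather than the paper's three; this is still within the stated bound of six.
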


\begin{proof}
We show that this may be implemented in the V$Z$ model using either three or six applied layers depending on whether there exist uncoupled qubits. Let $\mathcal{S}\subset \mathcal{V}$ be the set of uncoupled qubits, i.e., qubits $i$ for which $w_{ij}=0\ \forall j$, and set $X_\mathcal{S} = \sum_{i \in \mathcal{S}} X_i$. The implementation takes the form:
\begin{equation}
\label{eq:couplayer}
G_L = e^{-i \gamma X_\mathcal{S}} e^{-i t'H^x} U e^{-it'H^x}
\end{equation}
for 
\begin{equation}
    U = e^{-it(H^x + b \sum_{(i,j)\in \mathcal{E}} w_{ij} Z_i Z_j)}
\end{equation}
and $e^{-i \gamma X_\mathcal{S}}$ an auxiliary single qubit unitary layer, as defined in the previous subsection, acting on qubits in $\mathcal{S}$. This $e^{-i \gamma X_\mathcal{S}}$ cancels out the $X$-rotation uncoupled qubits experience while coupled qubits are being acted upon; in the case that all qubits are coupled it reduces to the identity. The decomposition into applied layers is depicted in Fig.~\ref{fig:couplegatelayer}.

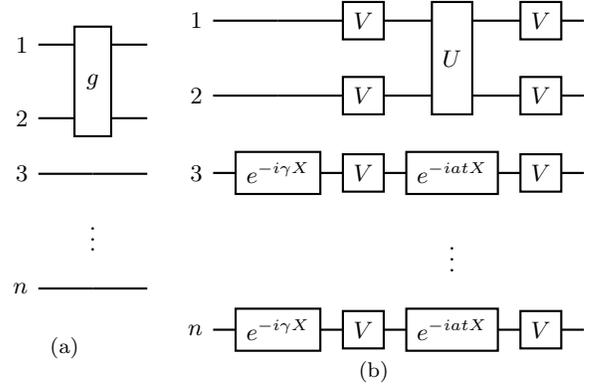
\begin{figure}
\subfigure[\ ]{\begin{quantikz}
    \lstick{1} &\gate[wires=2]{g}&\qw\\
     \lstick{2} &&\qw\\
    \lstick{3} &\qw&\qw\\
    &\vdots\\
    \lstick{$n$}&\qw&\qw\\
\end{quantikz}}\hspace{2mm}
\subfigure[\ ]{\begin{quantikz}[column sep=3mm]
     \lstick{1} &\qw &\gate{V} &\gate[wires=2]{U} &\gate{V} &\qw\\
     \lstick{2} &\qw &\gate{V} & &\gate{V} &\qw\\
    \lstick{3} &\gate{e^{-i\gamma X}} &\gate{V} &\gate{e^{-iatX}} &\gate{V} &\qw\\
    &&&\vdots\\
     \lstick{$n$} &\gate{e^{-i\gamma X}} &\gate{V} &\gate{e^{-iatX}} &\gate{V} &\qw
\end{quantikz}}
\caption{a) Example effective two-qubit gate layer $G_L$ with a gate $g$ acting on only the first 2 qubits. b) Its implementation as applied layers in the V$Z$ model. Here $V = e^{-iat'X}$.}
\label{fig:couplegatelayer}
\end{figure}

The pairwise coupling assumption [Eq.~\eqref{eq:w-cond}] allows use of the following two-qubit decomposition between qubits $i$ and $j$, derived in Appendix \ref{app:decomp}:
\bes
\label{eq:2qubitdecomp}
\begin{align}
\label{eq:2qubitdecomp-a}  
    &e^{-it(a (X_i+X_j) + b Z_i Z_j)}\\
    \label{eq:2qubitdecomp-b}  
    &= e^{-i\beta (X_i+X_j)}e^{-iD_1X_i X_j} e^{-iD_2Y_i Y_j}e^{-iD_3 Z_i Z_j}e^{-i\beta (X_i+X_j)} 
\end{align}
\ees
with
\begin{subequations}
\label{eq:2qubitsolns}
\begin{align}
\label{eq:solD1}
   &D_1 = 0 \\
   \label{eq:solD2}
   &D_2 = \frac{1}{2}(b t-\omega)\\
   \label{eq:solD3}
    &D_3 = \frac{1}{2}(b t+\omega)\\
    \label{eq:solomega}
    &\omega = \sin^{-1}\left(\frac{b}{\sqrt{4a^2+b^2}}\sin(t\sqrt{4a^2+b^2})\right)\\
    \label{eq:solbeta}
    &\beta = \frac{s}{4} \cos^{-1}\left(\cos(t\sqrt{4a^2+b^2})\sec(\omega)\right)+\frac{\pi}{2}    
\end{align}
\end{subequations}
where $s=\text{sign}(a \sin(\sqrt{4a^2+b^2}t))$ and we pick $\omega \in [-\frac{\pi}{2},\frac{\pi}{2}]$ and $\beta \in [0,\pi]$. Uncoupled qubits have $w_{ij}=0\ \forall j$, and simply experience $e^{-iatX}$, as shown in Fig.~\ref{fig:couplegatelayer}.

We restrict to the pure $ZZ$-coupling of Eq.~\eqref{eq:logicalcouplayer} between qubits $i$ and $j$ by requiring that  $D_2=0$ [to cancel the undesired $Y_iY_j$ term in Eq.~\eqref{eq:2qubitdecomp-b}], and thus that 
\beq
\label{eq:cond1}
b t = \omega = D_3 \equiv D .
\eeq 
For coupled pairs, we take $b \neq 0$ and solve for $t$ in terms of $D$. Namely, substituting Eq.~\eqref{eq:cond1} into Eq.~\eqref{eq:solomega} we obtain 
\beq
    \label{eq:sincs-4}
    \text{sinc}(D) = \text{sinc}\left(\sqrt{4a^2 t^2+D^2}\right) ,
\eeq
where $\text{sinc}(x) \equiv \sin(x)/x$.
 
Achieving the desired magnitude of coupling in Eq.~\eqref{eq:logicalcouplayer} up to an overall phase requires that $D = C$ mod $\pi$. As demonstrated in Appendix \ref{app:numeric}, for every $C \in [0,\pi]$, there exists a $k \in \{0,1,2,3\}$ such that when $D = C+k\pi$, Eq.~\eqref{eq:sincs-4} is numerically solvable for $at>0$. Thus for any $C \in [0,\pi]$, we can pick this value of $k$ and corresponding numerical solution $t$, and set $b = D t^{-1} = (C+k\pi) t^{-1}$. With these parameter choices and taking the product across all pairs of qubits, the overall action of $U$ becomes
\bes
\begin{align}
    &e^{-it(H^x + b \sum_{ij }w_{ij} Z_i Z_j)}\\
    &= e^{-i at X_\mathcal{S}} e^{-i \beta X_{\overline{\mathcal{S}}}} e^{-i C \sum_{ij} w_{ij} Z_i Z_j}e^{-i \beta X_{\overline{\mathcal{S}}}} ,
 \end{align}
 \ees
up to an overall phase, for $X_{\overline{\mathcal{S}}}=\sum_{i \notin \mathcal{S}} X_i$.

The effect of $\beta$ can be undone by a uniform $X$-rotation for time $at' = \pi-\beta$ before and after the coupling layer. This gives
\bes
\begin{align}
    &e^{-it' H^x}e^{-it( H^x+b \sum_{ij} w_{ij} Z_i Z_j)}e^{-i t'H^x}\\
    &= e^{-i (at-2\beta) X_\mathcal{S}} e^{-iCw_{ij} Z_i Z_j} .
\end{align}
\ees

In the case that $|\mathcal{S}|=0$, $X_\mathcal{S}=0$, and the result is pure $ZZ$ coupling of the desired magnitude $C$. In the case that some qubits are uncoupled, the extra rotation on those qubits can be undone with an auxiliary inhomogeneous $X$-rotation by angle $\gamma = 2\beta - at$. In this case we have 
\bes
    \begin{align}
    &e^{-i\gamma X_\mathcal{S}}e^{-it' H^x}e^{-it(H^x+b\sum_{ij}w_{ij} Z_i Z_j)}e^{-it'H^x}\\
    &= e^{-iC\sum_{ij} w_{ij} Z_i Z_j} .
    \end{align}
    \ees

In either case, by picking parameters
\begin{subequations}
    \begin{align}
    &t\ \text{s.t.}\ \ \text{sinc}(C+k\pi)=\text{sinc}(\sqrt{4a^2t^2+(C+k\pi)^2})\\
    &b = (C+k\pi)t^{-1}\\
    &t' = (\pi-\beta)a^{-1}\\
    &\gamma = 2\beta-at ,
    \end{align}
\end{subequations}
 the effective coupling layer is implemented in at most 6 applied layers.
\end{proof}

Alternatively, if the coupling layer is preceded or succeeded by single qubit unitary layers acting nontrivially on the exact same set of qubits, the required $e^{-it'H^x}$ evolution may be absorbed into the single qubit unitary layers, and thus become effectively free, as long as the auxiliary $e^{-i \gamma X_\mathcal{S}}$ has $\gamma$ tuned to compensate. Furthermore, if multiple effective $ZZ$-coupling layers share a set $\mathcal{S}$ of uncoupled qubits, their auxiliary $X$-rotations will commute with all the $ZZ$-couplings, and thus can be combined into a single auxiliary unitary. Both of these methods may be used to implement SWAP gates more efficiently, as is done in Section~\ref{sec:supremacy}.

It is also worth mentioning that an effective layer of disjoint  $e^{-i\frac{\pi}{8}Z\otimes Z}$ gates may be implemented in at most six applied layers, and one of disjoint CZ gates $e^{-i\frac{\pi}{4}Z_1 Z_2}e^{-i\frac{\pi}{4}(Z_1+Z_2)}$ in at most eight by absorbing one of the $e^{-it'H^x}$ rotations into the single qubit effective layer. Either case is sufficient for universality in combination with arbitrary single qubit gates. Though universality can be achieved using just $C=\frac{\pi}{4}$, the fact that we demonstrated the ability under the V$Z$ model to apply arbitrary (real) couplings is for more than just theoretical completeness, as it can reduce depth requirements.

\subsection{Depth Requirements}
\label{sec:depth-req}

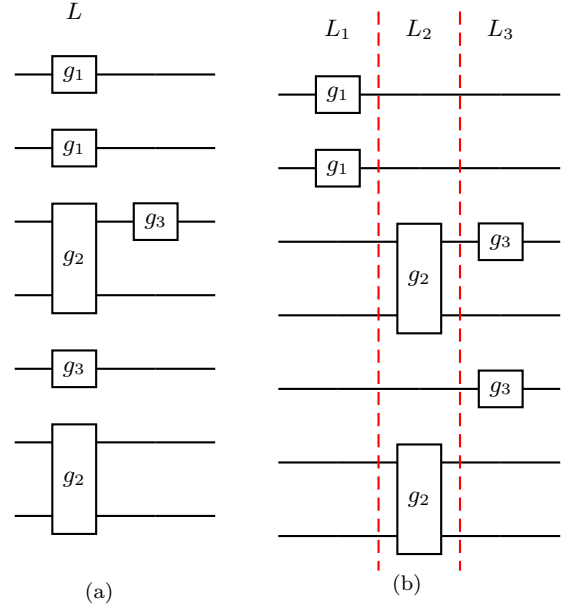
\begin{figure}
\subfigure[\ ]{\begin{quantikz}
&L&\\
&\gate{g_1}&\qw&\qw\\
&\gate{g_1}&\qw&\qw\\
&\gate[wires=2]{g_2}&\gate{g_3}&\qw\\
&&\qw&\qw\\
&\gate{g_3}&\qw&\qw\\
&\gate[wires=2]{g_2}&\qw&\qw\\
&&\qw&\qw\\
\end{quantikz}}
\hspace{5mm}%
\subfigure[\ ]{\begin{quantikz}
&L_1 & L_2 & L_3\\
&\gate{g_1} \slice{} & \qw \slice{} & \qw & \qw\\
&\gate{g_1} & \qw & \qw & \qw\\
&\qw & \gate[wires=2]{g_2} & \gate{g_3} & \qw\\
&\qw & & \qw & \qw\\
&\qw & \qw & \gate{g_3} & \qw\\
&\qw & \gate[wires=2]{g_2} & \qw & \qw\\
&\qw & & \qw & \qw
\end{quantikz}}
    \caption{(a) An example of a single effective layer $L$ of an arbitrary 1D circuit, in the case that $g_2$ and $g_3$ commute. The circuit is composed of gates $g_l$ of the UGS $\mathcal G$, which act on at most two neighboring qubits. (b) The layer $L$ implemented as three effective sublayers. Each sublayer consists of only a single gate and the identity, allowing it to be implemented in the V$Z$ model in constant depth using the methods of this section. 
    }
    \label{fig:depths}
\end{figure}

Naively, a depth $d$ quantum circuit on some graph could be implemented in the V$Z$ model on the same graph in $O(nd)$ layers without parallelization, simply by viewing each gate as its own effective layer. While this is sufficient for universality, we can reduce the required depth by a factor of $n$ using parallelization in terms of a universal gate set, leading to Theorem~\ref{thm:1}.

\begin{proof}[Proof of Theorem \ref{thm:1}]
Say that a circuit has depth $d$ when written in terms of UGS $\mathcal G$, composed of gates that act only on nearest neighbor qubits in some underlying architecture described by a graph $(\mcV,\mcE)$. We can assume that $\mathcal G$ is composed of single-qubit unitaries and two-qubit $ZZ$-coupling, as any two-qubit gate can be decomposed into $O(1)$ gates of this form~\cite{Barenco1995}. Then each effective layer of the original circuit can be viewed as at most $|\mathcal G|$ individual effective sublayers, each corresponding to the application of a single $g \in \mathcal G$ to some subset of the qubits (Fig.~\ref{fig:depths}). 

Consider an effective sublayer corresponding to a two-qubit coupling gate $g$. Within the sublayer, $g$ can act on an individual qubit $i$ once for each neighbor that qubit has. Because the decomposition used in Sec.~\ref{subsection:3.2} requires pairwise coupling of qubits, we must further decompose $g$ into an effective sublayer for each application of $g$ on $i$ (Fig.~\ref{fig:pairwise}). 
The application of a single two-qubit coupling gate layer may require as many effective sublayers as the minimum edge coloring of the underlying graph $(\mcV,\mcE)$. By Vizing's theorem~\cite{vizing}, this can be upper bounded by $\Delta+1$, where $\Delta$ is the maximum degree of the graph.
In case the graph is a grid based lattice of dimension $D$ , this simply becomes the number of sublattices of paired qubits, which is $\Delta=2D$. 

After full sublayer decomposition, each effective sublayer corresponds either to a single-qubit unitary or $ZZ$-coupling on monogamously paired qubits, so it is of a form that can be implemented in the V$Z$ model in $O(1)$ applied layers. An arbitrary quantum circuit of depth $d$ can then be represented in the V$Z$ model in depth $O(d|\mathcal G|\Delta)$ on the same graph. In the case of a $D$-dimensional integer lattice $\mathbb{Z}^D$, this becomes $O(2d|\mathcal{G}|D)$.
\end{proof}

For example to reproduce the effect of $\mathcal G = \{W, T, e^{-i\frac{\pi}{8} Z \otimes Z}\}$ on a 1D chain of qubits, effective sublayers of $W$ and $T$ gates may each be implemented in three applied layers while $e^{-i\frac{\pi}{8} Z \otimes Z}$ requires up to 2 effective sublayers of at most 6 each, so each effective layer of the original circuit requires at most 18 applied layers of the V$Z$ model to implement on the same graph.

By extending the 1D implementation of arbitrary IQP circuits used in Ref.~\cite{bermejo} to more general circuits, one sees that 1D quantum circuits are universal with $O(n)$ depth overhead. Thus the 1D V$Z$ model is capable of representing any depth-$d$ quantum circuit expressed in terms of UGS $\mathcal G$ on an arbitrary graph, in depth $O(n d |\mathcal G|$). Interestingly, as a fully connected graph has $\Delta = n-1$, in the worst case the 1D and same graph V$Z$ models have the same asymptotic depth scaling.

Note that in Table~\ref{tab:1} we state that an $O(d)$-depth 1D V$Z$ model reproduces the output distribution of a 1D circuit of $n$ logical qubits and depth $d$; the extra $n$ prefactor overhead in the expression $O(n d |\mathcal G|$) is incurred when going from an arbitrary graph to a 1D chain, or when generating SWAP gates.

The 1D case is also interesting in its own right: it is known that 1D $O(n)$ depth universal quantum circuits anticoncentrate~\cite{hangleiter} and can generate distributions that cannot be efficiently classically simulated~\cite{bermejo}. In the next section, we provide a circuit distribution that demonstrates this, and analyze the depth overhead required to implement it in the V$Z$ model.

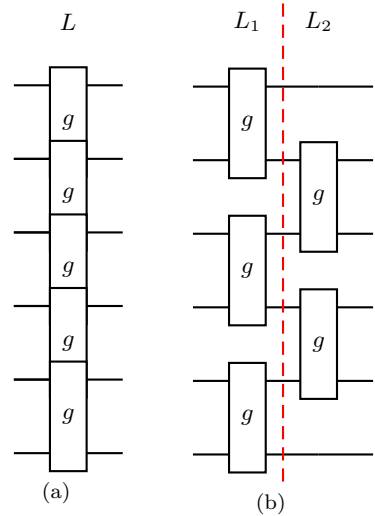
\begin{figure}
\subfigure[\ ]{\begin{quantikz}
&L &\\
&\gate[wires=2]{g} &\qw\\
&\gate[wires=2]{g} &\qw\\
&\gate[wires=2]{g} &\qw\\
&\gate[wires=2]{g} &\qw\\
&\gate[wires=2]{g} &\qw\\
& &\qw
\end{quantikz}}
\hspace{5mm}%
\subfigure[\ ]{\begin{quantikz}
&L_1\slice{} &L_2 &\\
&\gate[wires=2]{g} &\qw &\qw\\
& &\gate[wires=2]{g} &\qw\\
&\gate[wires=2]{g} & &\qw\\
& &\gate[wires=2]{g} &\qw\\
&\gate[wires=2]{g} & &\qw\\
& &\qw &\qw
\end{quantikz}}
    \caption{(a) An example of an effective coupling sublayer $L$ of a 1D circuit, in which the coupling gate $g$ is applied across all qubits, rather than restricted to pairwise coupling. (b) The sublayer $L$ implemented as two effective coupling sublayers, each with pairwise coupling.  
    }
\label{fig:pairwise}
\end{figure}

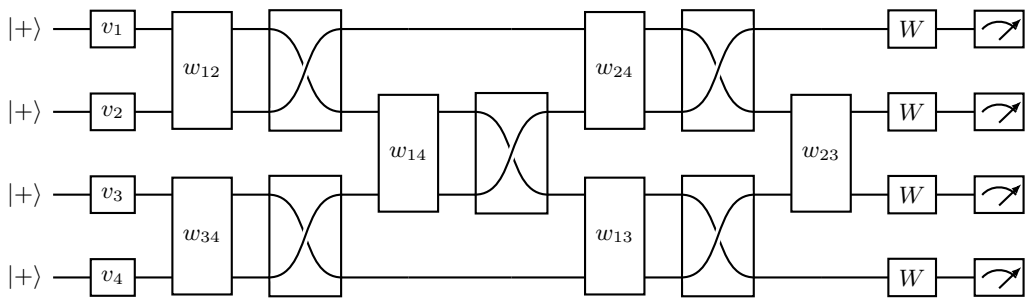
\begin{figure*}
\begin{quantikz}
    \lstick{$\ket{+}$} & \gate{v_1} & \gate[wires=2]{w_{12}} & \gate[swap]{} & \qw & \qw & \gate[wires=2]{w_{24}} & \gate[swap]{} & \qw & \gate{W} & \meter{}\\
    \lstick{$\ket{+}$} & \gate{v_2} & & & \gate[wires=2]{w_{14}} & \gate[swap]{} & & & \gate[wires=2]{w_{23}} & \gate{W} & \meter{}\\
    \lstick{$\ket{+}$} & \gate{v_3} & \gate[wires=2]{w_{34}} & \gate[swap]{} & & & \gate[wires=2]{w_{13}} & \gate[swap]{} & & \gate{W} & \meter{}\\
    \lstick{$\ket{+}$} & \gate{v_4} & & & \qw & \qw & & & \qw & \gate{W} & \meter{}
\end{quantikz}
    \caption{The $n=4$ case of a 1D $O(n)$-depth circuit capable of generating the distribution in Eq.~\eqref{eq:IQP} via alternating layers of $ZZ$ coupling and SWAP gates. It is equivalent to that of Ref.~\cite{bermejo}, but composed of Hadamard gates ($W$), SWAP gates, and $Z$-diagonal gates of the form $e^{i v_i Z_i}$ and $e^{i w_{ij} Z_i Z_j}$, denoted by their degree of rotation $v_i$ or $w_{ij}$. In this design qubits effectively act as particles moving past each other in 1D, and after $n$ layers each qubit has a chance to interact with each other qubit once. As $Z$-diagonal gates commute, multiple interactions between any pair of qubits can be implemented as a single interaction. Thus, this 1D depth $O(n)$ circuit is sufficient to implement the circuit implied by
    Eq.~\eqref{eq:IQP}.}
    \label{fig:circuit}
\end{figure*}

\begin{figure*}
\subfigure[\ ]{\begin{quantikz}
    & \gate{\frac{3\pi}{4}} \slice{} & \gate[wires=2]{\frac{5\pi}{8}} & \qw\\
    & \gate{\frac{5\pi}{8}} & & \qw\\
    & \gate{\frac{\pi}{4}} & \gate[wires=2]{\frac{\pi}{4}} & \qw\\
    & \gate{\frac{\pi}{8}} & & \qw
\end{quantikz}\label{fig:5a}}
\hspace{5mm}
\subfigure[\ ]{\begin{quantikz}
    & \gate{\frac{\pi}{2}} & \gate{\frac{\pi}{4}} & \qw \slice{} & \gate[wires=2]{\frac{\pi}{2}} & \qw & \gate[wires=2]{\frac{\pi}{8}} & \qw\\
    & \gate{\frac{\pi}{2}} & \qw & \gate{\frac{\pi}{8}} \slice{} & & \qw & \qw & \qw\\
    & \qw & \gate{\frac{\pi}{4}} & \qw \slice{} & \qw & \gate[wires=2]{\frac{\pi}{4}} & \qw\\
    & \qw & \qw & \gate{\frac{\pi}{8}} \slice{} & \qw & & \qw
\end{quantikz}\label{fig:5b}}
    \caption{(a) The first two layers of an example circuit with arbitrarily chosen $w_{ij}$ and $v_i$. 
    $Z$-diagonal gates are denoted by their degree of rotation.
    (b) The same two layers rewritten as effective sublayers which can be implemented in the V$Z$ model. 
    For example, the gate corresponding to $v_2=\frac{5\pi}{8}$ in (a) is decomposed into $\frac{\pi}{2}$ and $\frac{\pi}{8}$ gates in (b), with the $\frac{\pi}{8}$ gate shifted so that it is executed in parallel with the other $\frac{\pi}{8}$ gate from (a) to form a sublayer.}
    \label{fig:bermejo1}
\end{figure*}
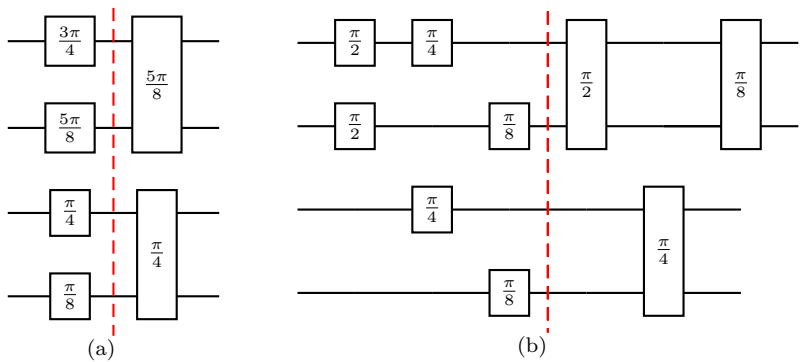

\section{Demonstrating quantum supremacy}
\label{sec:supremacy}

\subsection{Homogeneous $X$-field}
In this section we apply the previous results towards demonstration of quantum supremacy.
\begin{mycorollary}
The 1D V$Z$ model can generate a distribution in $O(n)$ depth that cannot be approximately simulated by a classical computer in the worst case.
\end{mycorollary}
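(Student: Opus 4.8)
The plan is to leverage Corollary~\ref{cor:1D} (universality of the 1D V$Z$ model) together with the known hardness results for constant-versus-linear depth 1D quantum circuits. Concretely, I would start from a 1D, $O(n)$-depth quantum circuit whose output distribution is known to be classically hard to approximately sample unless the polynomial hierarchy collapses; the natural choice is the IQP-type construction of Ref.~\cite{bermejo}, which realizes a distribution of the schematic form
\begin{equation}
\label{eq:IQP}
P(\mathbf{s}) = \left| \langle \mathbf{s} | W^{\otimes n} \, e^{i \sum_{i<j} w_{ij} Z_i Z_j + i \sum_i v_i Z_i} \, W^{\otimes n} | \mathbf{0} \rangle \right|^2
\end{equation}
on a 1D chain via alternating layers of $ZZ$-couplings and SWAP gates, as depicted in Fig.~\ref{fig:circuit}. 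The point of that circuit design is that each pair of qubits is brought adjacent exactly once over $O(n)$ layers (qubits ``moving past'' each other), so an all-to-all coupling pattern is realized in linear 1D depth, and because $Z$-diagonal gates commute, repeated interactions between a pair collapse to one.

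The key steps, in order, are: (i) invoke the result of Ref.~\cite{bermejo} (and the anticoncentration statement of Ref.~\cite{hangleiter}) that the distribution~\eqref{eq:IQP} cannot be approximately sampled by any classical algorithm in $\mathrm{BPP}$ unless the polynomial hierarchy collapses; (ii) exhibit the explicit 1D, $O(n)$-depth circuit of Fig.~\ref{fig:circuit} built from the gate set $\mathcal G = \{W,\, \mathrm{SWAP},\, e^{i v_i Z_i},\, e^{i w_{ij} Z_i Z_j}\}$ that produces~\eqref{eq:IQP}; (iii) apply Theorem~\ref{thm:1} with $\Delta = 2$ to conclude that this circuit can be reproduced by the 1D V$Z$ model with only $O(1)$ depth overhead per layer, so total depth remains $O(n)$; and (iv) check that the one ingredient beyond the single-qubit and $ZZ$ layers of Lemmas~\ref{lem:single} and~\ref{lem:couple} — namely the SWAP gates — is also implementable in $O(1)$ applied layers, using the decomposition of SWAP into three $ZZ$-coupling (or CZ) layers and single-qubit layers, together with the absorption tricks noted after Lemma~\ref{lem:couple} that let the $e^{-it'H^x}$ conjugations be merged into adjacent single-qubit layers so the overhead does not blow up. Since Theorem~\ref{thm:1} guarantees exact simulation up to arbitrary accuracy, the V$Z$ output distribution can be made $\epsilon$-close in total variation distance to~\eqref{eq:IQP}, which suffices to inherit its approximate-sampling hardness.

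The main obstacle I anticipate is step (iv) combined with the constant-factor bookkeeping: one must verify that stringing together alternating SWAP layers and $ZZ$-coupling layers does not force extra edge-coloring sublayers or extra auxiliary $e^{-i\gamma X_\mathcal{S}}$ corrections that would spoil the $O(n)$ total depth, and in particular that the uncoupled-qubit set $\mathcal{S}$ is handled consistently across consecutive coupling layers (this is precisely where the remarks following Lemma~\ref{lem:couple} about shared $\mathcal{S}$ and absorbed $H^x$-rotations do the work). A secondary point that needs care is that~\eqref{eq:IQP} involves only commuting $Z$-diagonal gates sandwiched between Hadamard walls, so one should argue that the 1D ``particles moving past each other'' layout faithfully realizes the full all-to-all $w_{ij}$ pattern in depth $O(n)$ before any V$Z$ compilation — i.e., the reduction from the abstract IQP distribution to a genuine 1D nearest-neighbor circuit is itself $O(n)$-depth, as illustrated in Figs.~\ref{fig:circuit}, \ref{fig:bermejo1}. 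Everything else is a direct appeal to Theorem~\ref{thm:1} and the cited hardness results.
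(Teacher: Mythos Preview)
Your proposal is correct and follows essentially the same route as the paper: invoke the IQP hardness result of Refs.~\cite{bremner1,bermejo}, realize the distribution as a 1D, $O(n)$-depth nearest-neighbor circuit via the alternating coupling/SWAP layout of Fig.~\ref{fig:circuit}, and compile that circuit into the V$Z$ model using Theorem~\ref{thm:1} with $\Delta=2$, handling SWAP via its decomposition into $ZZ$ and single-qubit layers together with the absorption tricks. The paper additionally pins down the constant to $40n+10$ applied layers (via the binary decomposition of $v_i,w_{ij}$ and careful merging of $e^{-it'H^x}$ and $e^{-i\gamma X_\mathcal{S}}$ corrections), but for the $O(n)$ claim your outline is exactly what is needed.
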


The proof follows from Theorem \ref{thm:1} and the distribution generated in Refs.~\cite{bremner1, bermejo}, in which it is shown that no classical computer can efficiently sample from a distribution approximating the one in question. Here we detail the distribution and account for the exact overhead introduced from generating it in the V$Z$, rather than gate-based model of quantum computation. This distribution was proposed in Ref.~\cite{bremner1} and is the output distribution of an Instantaneous Quantum Polynomial-time (IQP) model with long range interactions:
\begin{equation}
\label{eq:IQP}
    P_{\text{IQP}}(\textbf{s}) = \langle \mathbf{s} | H^{\otimes n} e^{-iC_z} \ket{\bm{+}} ,
\end{equation}
where
\begin{align}
& C_z = w_{ij}Z_i Z_j + v_i Z_i \ , \ \  w_{ij}, v_i \in \left\{\frac{1}{8}k \pi\right\}_{k=0}^7 ,
\end{align}
with $w_{ij}$ and $v_i$ chosen uniformly at random from their domains and \textbf{s} a string in the $Z$ basis. 

In the V$Z$ model we can generate the distribution of Eq.~\eqref{eq:IQP} using a modified version of the circuit design of Ref.~\cite{bermejo}, as is depicted in Fig.~\ref{fig:circuit}. The depth overhead of implementing this circuit in the V$Z$ model is affected by choice of UGS. Rather than implementing each term in $C_Z$ as an independent single qubit or coupling gate [Fig.~\ref{fig:5a}], we break each layer of single qubit and coupling gates into three effective sublayers by using the binary decomposition
\begin{equation}
\label{eq:binary}
    v_i = \frac{4\pi}{8}a_i+\frac{2\pi}{8}b_i+\frac{\pi}{8}c_i \hspace{.5cm} a_i,b_i,c_i \in \{0,1\}
\end{equation}
and implementing each $a$, $b$, $c$ as its own sublayer. Each $ZZ$ coupling layer is decomposed into three sublayers with an equivalent decomposition on $w_{ij}$ [Fig.~\ref{fig:5b}]. A SWAP gate may be implemented by decomposition into a product of Hadamard gates and single-qubit and two-qubit $Z$-diagonal gates (Fig.~\ref{fig:6}).

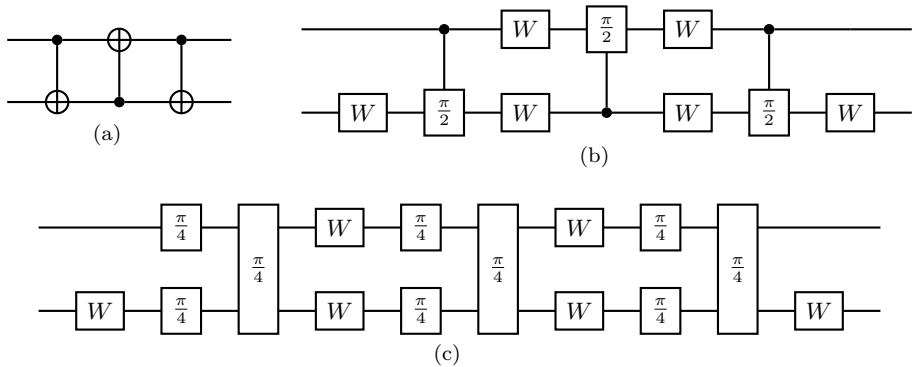
\begin{figure*}
\subfigure[\ ]{\begin{quantikz}
& \ctrl{1} & \targ{} & \ctrl{1} & \qw\\
& \targ{} & \ctrl{-1} & \targ{} & \qw
\end{quantikz}\label{fig:6a}}
\hspace{5mm}%
\subfigure[\ ]{\begin{quantikz}
& \qw & \ctrl{1} & \gate{W} & \gate{\frac{\pi}{2}} & \gate{W} & \ctrl{1} & \qw & \qw\\
& \gate{W} & \gate{\frac{\pi}{2}} & \gate{W} & \ctrl{-1} & \gate{W} & \gate{\frac{\pi}{2}} & \gate{W} & \qw
\end{quantikz}\label{fig:6b}}
\subfigure[\ ]{\begin{quantikz}
& \qw & \gate{\frac{\pi}{4}} & \gate[wires=2]{\frac{\pi}{4}} & \gate{W} & \gate{\frac{\pi}{4}} & \gate[wires=2]{\frac{\pi}{4}} & \gate{W} & \gate{\frac{\pi}{4}} & \gate[wires=2]{\frac{\pi}{4}} & \qw & \qw\\
& \gate{W} & \gate{\frac{\pi}{4}} & & \gate{W} & \gate{\frac{\pi}{4}} & & \gate{W} & \gate{\frac{\pi}{4}} & & \gate{W} & \qw
\end{quantikz}\label{fig:6c}}
    \caption{The SWAP gate represented as (a) a product of CNOT gates, (b) Hadamard ($W$) and CZ gates, and (c) Hadamard, $e^{-i\frac{\pi}{4}Z}$, and $e^{-i\frac{\pi}{4}Z \otimes Z}$ rotation gates. The latter may be implemented in the V$Z$ model using the methods of Sec.~\ref{sec:model}.}
    \label{fig:6}
\end{figure*}

Using the methods of Sec.~\ref{sec:model}, each sublayer of Fig.~\ref{fig:5b} corresponding to a single qubit gate can be implemented in three applied layers, and thus the set of all single qubit gates requires nine applied layers using the decomposition of Eq.~\eqref{eq:binary}. Similarly, each coupling layer requires $18$ applied layers, or $18n$ in total between the $n$ $ZZ$-coupling layers. In implementation of the SWAP layer [Fig.~\ref{fig:6c}], we combine adjacent single qubit unitary layers acting nontrivially on the same qubits. The result is that a SWAP gate requires only five effective single qubit unitary layers, or 15 applied layers. We further combine the $e^{-it'H^x}$ rotations needed for coupling layers into the adjacent single qubit unitary layers; all but one $e^{-it'H^x}$ rotation can be absorbed this way. All the auxiliary $e^{-i\gamma X_\mathcal{S}}$ rotations can be combined, so the set of three effective $ZZ$-coupling layers requires only four applied layers to implement if every qubit is coupled ($|\mathcal{S}|=0$), or seven otherwise. In total, each of the $n$ effective SWAP layers requires either $19$ or $22$ applied layers, depending on whether the SWAP gates span every qubit or not. When $n$ is even, half of the SWAP layers act on all qubits, but for odd $n$ every layer has uncoupled qubits, so in general the total overhead for the SWAP gate layers is at most $22n$. Summing the $18n$ applied layers for $ZZ$-coupling, $22n$ applied layers for SWAP gates, nine applied layers for single-qubit gates, and final all-qubit Hadamard, the entire circuit can be implemented in $40n+10$ applied layers. 

These calculations assume a noiseless computation model. The presence of noise can reduce fidelity of an output distribution of a circuit to the point that it no longer demonstrates quantum supremacy (see, e.g., Ref.~\cite{Zlokapa:2020} for a discussion of this in the context of random circuit sampling), but for this distribution even a simple repetition code with $O(\text{ln}(n))$ repetitions is sufficient to regain supremacy~\cite{bremner2}. The universality of the V$Z$ model allows it to perform this repetition code efficiently, in contrast to methods of demonstrating QS which are not universal, and thus not necessarily able to correct for the effects of decoherence.

 \subsection{Alternating $X$-field}
In this section we consider a slight modification of the V$Z$ model in which the $X$-field can be turned off for half the qubits at a time -- either those of even or odd index. This modification reduces the depth required to apply the SWAP gate while still being easier to implement than the full circuit model, and thus reduces total depth required to demonstrate quantum supremacy. We still use the homogeneous $X$-field method for implementing the $ZZ$-coupling layers. For even $n$, in each SWAP layer the SWAP gates alternate between acting pairwise on one sub-lattice such that they cover all of the qubits, and acting pairwise on the other sub-lattice such that there are two unaffected qubits at the boundaries of the qubit chain. For odd $n$ every sub-lattice acted on by SWAP gates has a single unaffected qubit. Cases with unaffected qubits require more layers to cancel the $X$-rotations these boundary qubits experience while the other qubits are being swapped, which we will discuss it at the end of this subsection.

In the alternating $X$-field method, instead of implementing the SWAP gate as three CNOT gates as in Fig.~\ref{fig:6a}, we use different building blocks: $ZZ$-coupling layers with nonzero $X$-fields on either only the even or only the odd qubits, as shown in Fig.~\ref{fig:1Dchain}. First we start from one of these coupling layers with $X$-fields on only the even qubits. We can write the action of this layer as
\begin{align}
U &= \prod_{i=\text{odd}} e^{-i t (a X_{i+1} + b Z_i Z_{i+1})} 
\end{align}
for appropriately chosen $b$. Consider just the two-qubit gate acting on the first and second qubits; it has the following block-diagonal structure:
\be
U_{12} = e^{-i t (a X_2 + b Z_1 Z_2)} = 
\begin{bmatrix} 
U_1& \bm{0}  \\
\bm{0} & U_2 \\
\end{bmatrix}
\ee
where $\bm{0}$ is the $2\times 2$ zero-matrix,
and we can further decompose 
\bes
    \begin{align}
    \label{eq:decomp-1}
        U_1 &= e^{i \alpha Y } e^{-i \gamma X} e^{- i \alpha Y} \\
        U_2 &= e^{-i \alpha Y } e^{-i \gamma X} e^{i \alpha Y},
    \end{align}
\ees
where
\begin{subequations}\label{eq:decomp}
\begin{align}
\label{eq:decomp-2}
\alpha &= \frac{1}{2} \cos^{-1} \left(\frac{a}{\sqrt{a^2 + b^2}}\right)\\
\label{eq:decomp-4}
\gamma &= t \ \sqrt{a^2 + b^2} .
\end{align}
\end{subequations}
If we set $b=a$ and $t = \frac{\pi}{2 \sqrt{2}a}$, then $U_1 = -i W$, and $U_1^\dag U_2 = \tilde{Y}$, where 
$\tilde{Y} = ZX = 
\begin{bmatrix} 
0 & 1  \\
-1 & 0 \\
\end{bmatrix}$. 
Thus for this choice of $b$ and $t$, 
\begin{equation}
 \label{eq:U12}
 U_{12} = -i(I\otimes W)C\tilde{Y}_{12}
\end{equation}
where $C\tilde{Y}_{12}$ is the controlled $\tilde{Y}$ gate. Similar to the $ZZ$-coupling gates (recall the discussion in Sec.~\ref{subsection:3.2}), this operation leads to a net $X$-rotation on uncoupled qubits which still experience an $X$-field, which then must be undone by an extra single-qubit unitary following the SWAP layer.

 Defining the above building block $U_{12}$, the total circuit required for a SWAP gate will be:
 \begin{subequations}
  \label{eq:U12prod}
\begin{align}
\label{eq:U12prod-1}
     C &= U_{12} W_1 W _2 U_{21} U_{12} W_2  \\
\label{eq:U12prod-2}
    &=i(W_2 C\tilde{Y}_{12}) W_1 W _2 (W_1 C\tilde{Y}_{21}) (W_2 C\tilde{Y}_{12}) W_2
\end{align}
 \end{subequations}
The product of $W_1 W_2$ needs just one effective single qubit gate layer in total, and any effect on uncoupled qubits can be incorporated into the extra single-qubit unitary $e^{-i\gamma X_\mathcal{S}}$ on these qubits following the SWAP layer. Since the last $W_2$ requires three single qubit gate layers, the total number of required layers for implementing the above decomposition is seven.
 
 Defining $\tilde{\bar{Y}} = - \tilde{Y} = XZ$ and using the equality $W_2 C\tilde{Y}_{12} = C\tilde{\bar{Y}}_{12} W_2$, we can simplify the above expression as below:
\begin{subequations} \label{eq:U12prodproof}
\begin{align}
    \label{eq:U12prodproof-1}
 C&= iW_2 C\tilde{Y}_{12} W_1 W_2 W_1 C\tilde{Y}_{21} W_2 C\tilde{Y}_{12} W_2  \\
 \label{eq:U12prodproof-2}
 &= iC\tilde{\bar{Y}}_{12} W_2 W_1 W_2 W_1 C\tilde{Y}_{21} C\tilde{\bar{Y}}_{12} W_2 W_2 \\
 \label{eq:U12prodproof-3}
 &= iC\tilde{\bar{Y}}_{12} C\tilde{Y}_{21} C\tilde{\bar{Y}}_{12} \\
 \label{eq:U12prodproof-4}
 &= i CX_{12} CZ_{12} CZ_{21} CX_{21} CX_{12} CZ_{12}  \\
 \label{eq:U12prodproof-5}
 &= i CX_{12}  CX_{21} CX_{12} CZ_{12} \\
 \label{eq:U12prodproof-6}
 &= i \text{SWAP}_{12} CZ_{12} \\
 \label{eq:U12prodproof-7}
 &= i CZ_{21} \text{SWAP}_{12} .
 \end{align}
 \end{subequations}
This is the required SWAP gate up to $CZ_{21}$. This unwanted $CZ_{21}$ can be written as a product of single qubit $Z$-diagonal gates and $ZZ$-couplings, and can in fact be incorporated into the existing single qubit and coupling gates already in the overall circuit, simply by shifting the values of $w_{12}$, $v_1$, and $v_2$ by $\frac{\pi}{4}$. These $v_i$ and $w_{ij}$ parameters are chosen from a uniform distribution over the full range of multiples of $\frac{\pi}{8}$ mod $\pi$, so the shift by $\frac{\pi}{4}$ does not affect the distribution. This makes the $CZ$ gate effectively free to include.

In this construction a SWAP gate layer spanning all qubits can be implemented in seven applied layers. In the case of unswapped qubits, an unswapped qubit experiences a different net $X$-rotation depending on whether it is of even or odd qubit index. As there are at most two values of unwanted $X$-rotation after the SWAP layer, these can be undone in at most two single-qubit effective layers, or six applied layers. This brings the total number of applied layers to implement a single effective SWAP layer to seven if every qubit is swapped, $10$ if there is a single unswapped qubit, or at most $13$ otherwise.

For even $n$ the model alternates between SWAP layers with zero and two unswapped qubits, so the set of all SWAP layers requires $7\frac{n}{2}+13\frac{n}{2}=10n$ applied layers. For odd $n$, every SWAP layer has a single unswapped qubit, so the SWAP layers require a total of $10n$ applied layers -- the same as in the even $n$ case. Including $18n$ applied layers of $ZZ$-coupling gates and $10$ of single qubit gates and Hadamards, the total number of applied layers required to reach supremacy becomes $28n+10$ (recall that the cost was $40n+10$ in the case of a fully homogeneous $X$-field). 
 
\begin{figure}[t]
 \includegraphics[scale=.75]{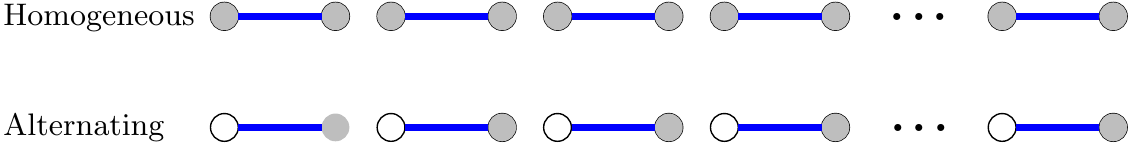}
 \caption{Homogeneous vs alternating pattern of $X$-fields. Blue lines represent $ZZ$-couplings, which form a sub-lattice. Grey circles represent nonzero $X$-field, while white circles have no $X$-field.}
 \label{fig:1Dchain}
 \end{figure}

\section{Conclusions}
\label{sec:conclusion}

Minimalism is both a sound engineering design principle and a desirable feature of theoretical models. By minimalism we mean a reduction in the consumption of some or more precious resources, or the reliance on a small set of assumptions. This work is an attempt to be minimalistic about the resources and assumptions underlying both implementations and models of universal quantum computation and quantum supremacy. From a practical perspective, spatially homogeneous or global control can be a significant advantage over individualized or local control, since the latter typically involves additional control wires or lasers, each of which is another source of noise and decoherence. From a theoretical perspective, it is interesting to investigate the computational power of fully or partially translationally invariant models in low dimensions.  

In accordance with this perspective we have studied here the power of a minimalistic model that assumes spatially homogeneous $X$-field control but spatially inhomogeneous $Z$-field control [Eq.~\eqref{eq:H_l}]. We have shown that this ``V$Z$ model" can be used to demonstrate quantum supremacy even in 1D, and is furthermore universal for quantum computation. The overhead required to implement the circuit model with a standard universal gate set using the 1D V$Z$ model is constant in system size and is significantly lower than the universal version of QAOA (see Table~\ref{tab:1}).
We are unaware of any model of quantum computation with a higher degree of homogeneity that achieves the same asymptotic depth scaling as the V$Z$ model.

Perhaps the most immediate application of the V$Z$ model would be a demonstration of quantum supremacy in the IQP setting, which we have shown here can be implemented in at most $40n+10$ layers of control pulses, for a circuit of width $n$. This model may be the next to be used for a quantum supremacy demonstration, now that this has been done using random circuit sampling~\cite{Arute:2019aa} and boson sampling~\cite{Zhong:2020aa}. Such a demonstration within the V$Z$ model would be particularly viable using flux qubits~\cite{harris_flux_qubit_2010,Yan:2016aa,grover2020fast}, where leaving a constant and homogeneous $X$-field on while locally controlling only $Z$-fields and interactions is both simpler and less prone to noise than also controlling the $X$-field.

\acknowledgments
We thank Paolo Zanardi for many insightful discussions. This research was sponsored by the Army Research Office and was accomplished under Grant Number W911NF-20-1-0075. This research is also based upon work (partially) supported by the Office
of the Director of National Intelligence (ODNI), Intelligence Advanced Research Projects Activity (IARPA) and the Defense
Advanced Research Projects Agency (DARPA), via the U.S. Army Research Office contract W911NF-17-C-0050. The views
and conclusions contained herein are those of the authors and should not be interpreted as necessarily representing the official
policies or endorsements, either expressed or implied, of the ODNI, IARPA, DARPA, ARO, or the U.S. Government. The
U.S. Government is authorized to reproduce and distribute reprints for Governmental purposes notwithstanding any copyright
annotation thereon.
This work is partially supported by a DOE/HEP QuantISED
program grant, QCCFP/Quantum Machine Learning and Quantum
Computation Frameworks (QCCFP-QMLQCF) for HEP, Grant No.
DE-SC0019219.

%%%%%%%%%%%%%%

\begin{appendix}

%\section{X field in Flux Qubits}\label{app:flux}
%A major source of decoherence in superconducting flux qubits is noise in qubit frequency, which is often caused by noise in qubit $Z$-bias lines. In the regime in which flux qubits are usually operated, the Hamiltonian is approximately
%\begin{equation}
%    H = A\sigma_x + B\sigma_z
%\end{equation}
%where $B$ is the $Z$-bias. The energy gap is $2\sqrt{A^2+B^2}$, which for non-negligible $A$ is insensitive to first order in bias line noise. However, for negligible $A$, the energy gap is $\sim 2|B|$, and bias line noise leads directly to noise in qubit frequency. 

\section{Complexity basis of quantum supremacy}
\label{app:complex}

Bounded-error Probabilistic Polynomial-time (BPP) is the computational complexity class of problems that a classical computer, with access to true randomness, can solve with high probability in time polynomial in input size. It can be viewed as the class of problems that is it realistic for a classical computer to solve. Bounded-error Quantum Polynomial-time (BQP) is the equivalent class for quantum computers. It is widely believed that BQP $\neq$ BPP, due to a hypothetical tool called postselection. Postselection is the ability to select the results of randomness after completing all computations. The classes associated with BPP and BQP when the computer has access to postselection are referred to as PostBPP and PostBQP. While PostBPP is known to be within the third level of the polynomial hierarchy, a hierarchy representing a generalization of the P vs NP distinction, it is known by Toda's and Aaronson's theorems that PostBQP in fact contains the entire polynomial hierarchy as a subset \cite{toda1991, aaronson2016}. If BQP=BPP, then postBQP=postBPP, and the entire polynomial hierarchy would be contained within its third level, a situation referred to as collapse of the polynomial hierarchy. This is conjectured to not happen for the same reason it is believed that P $\neq$ NP.

The output of problems in BQP takes the form of distributions that one samples from. Because we cannot take infinite samples in practice, we cannot perfectly calculate the distribution that a practical quantum computer is sampling from. Verifying that a problem has been solved relies on anticoncentration of the output distribution \cite{bouland2018, bremner2}: the outcome probabilities must be sufficiently spread across all possible outcomes of measurements, so that there is a nonvanishing signal-to-noise ratio for at least a constant fraction of all possible outcomes.

\section{QAOA vs the V$Z$ model}
\label{app:notQAOA}

In QAOA~\cite{farhi2014quantum} one considers a Hamiltonian of the form
\bes
    \label{eq:HQAOA}
\begin{align}  
    H(t) &= a(t) H^X + d(t) H^Z  \\
    \label{eq:Hx}    
    H^X &= \sum_{i\in\mathcal{V}} X_i \\ 
    H^Z  &=\sum_{(i,j)\in\mathcal{E}} b_{ij} Z_i Z_j + \sum_{i\in\mathcal{V}} c_i Z_i  ,
    \label{eq:Hz}
\end{align} 
\ees
for $n$ qubits occupying the vertices $\mathcal{V}$ of a graph $\mathcal{G}=\{\mathcal{V},\mathcal{E}\}$.
The parameters $b_{ij}$ and $c_{i}$ are controllable longitudinal local field and coupling constants, respectively, and $a(t)$ and $d(t)$ are time-dependent control functions. Note that $H_l$ [Eq.~\eqref{eq:H_l}] is an instance of $H(t)$, for each fixed $l$ and $t$.

The level-$p$ QAOA produces an approximation $C_p^*$ to the optimal value of the classical cost function represented by $H^Z$:
\bes
\begin{align}
\label{eq:QAOA-def}
U(\beta, \gamma) &= e^{-i \beta H^X} e^{-i \gamma H^Z}\\
|\psi(\boldsymbol{\gamma},\boldsymbol{\beta}) &= \left(\prod_{k =1}^p U(\beta_k , \gamma_k) \right) |\bm{+}\rangle\\
C_p^* = &\min_{\boldsymbol{\gamma},\boldsymbol{\beta}} \; \; \langle\psi(\boldsymbol{\gamma},\boldsymbol{\beta}) | H^Z | \psi(\boldsymbol{\gamma},\boldsymbol{\beta}) \rangle
\end{align}
\ees
where $\boldsymbol{\gamma} = (\gamma_1 , ... , \gamma_p) ,\boldsymbol{\beta} = (\beta_1,...,\beta_p)$ are the angles that parameterize the circuit.  Various heuristic methods for choosing these angles have been considered, and for small values of $p = O(1)$ the optimization can be done exactly~\cite{szegedy2019qaoa}.  

The V$Z$ model (Def.~\ref{def:TZ}) differs from 
QAOA in that the alternating sequence of unitaries always includes $H^X$, unlike the QAOA sequence given in Eq.~\eqref{eq:QAOA-def}. It also differs in that in QAOA the $H^Z$ Hamiltonian is fixed, while in the V$Z$ model we assume that the $b$,$c$,$w$, and $v$ coefficients in Eq.~\eqref{eq:H_l} are controllable from layer to layer, which would be equivalent to making $b$ and $c$ time-dependent in Eq.~\eqref{eq:Hz}.

\section{Depth Scaling of QAOA Universality}
\label{app:lloyd}
Here we analyze the required depth of the QAOA model used in Ref.~\cite{lloyd} to reproduce a 1D circuit to within a given total variation distance error $\epsilon$. This model of QAOA uses co-irrationality of terms in a $Z$-diagonal Hamiltonian $H_Z$ to reproduce a translationally invariant gate layer $U$ composed of a tensor product of one to two qubit gates across all qubits. The evolution time needed to reproduce a specific gate layer such that
\begin{equation}
    \|e^{-itH_Z}-U\| \leq \epsilon'
\end{equation}
is $t \in O(\epsilon'^{-4})$, where $\|\cdot \|$ is the operator norm. 

The BCQA method applies individually addressed gates by using SWAP gates to effectively walk a certain qubit, dubbed the control unit, across the chain until it is adjacent to a qubit being acted on~\cite{Simon}, then performing controlled gates. It thus requires $O(n)$ depth to implement a single gate. Assuming the ability to parallelize, a 1D circuit of depth $d$ with $O(nd)$ gates will then require $O(nd)$ depth BQCA to reproduce. In the QAOA implementation this becomes $O(nd\epsilon'^{-4})$, but we would like to bound runtime in terms of total variation distance $\epsilon$ of the output distribution, not operator norm error $\epsilon'$ of each individual gate. 

For a target BQCA protocol $C = \prod_{l=1}^{nd}U_l$, we can write the QAOA approximation as $C' = \prod_{l=1}^{nd}(U_l+\delta_l)$, where $\delta_l$ is the error of the $l^{\text{th}}$ gate layer, such that $\|\delta_l\| \leq \epsilon'\ \forall l$. Let $\Delta \equiv C'-C$. Then the operator norm error of the complete QAOA implementation becomes
\bes
    \begin{align}
    \|\Delta\| &= \|\prod_{l=1}^{nd}(U_l+\delta_l) - \prod_{l=1}^{nd}U_l\|\\
    &= \|\sum_{l=1}^{nd} U_{nd}...U_{l+1}\delta_l U_{l-1}...U_1 + \cdots \|\\
    &\leq \sum_{l=1}^{nd} \|\delta_l\|+\sum_{l,k=1}^{nd} \|\delta_l\delta_k\| + \cdots \\
    &\in O(nd\epsilon') .
    \end{align}
\ees
Thus the total variation distance between the target BQCA and applied QAOA circuits becomes
\bes\label{eq:TVDbound}
\begin{align} 
    &\sum_{\bm{s}} |P'(\bm{s}) - P(\bm{s})| 
    \leq \text{Tr}[|C'|0\rangle \langle 0 | C'^\dag - C | 0 \rangle \langle 0 | C^\dag|]\\
    &= \text{Tr}[|(\Delta+C)|0\rangle \langle 0 |(\Delta+C)^\dag - C | 0 \rangle \langle 0 | C^\dag|]\\ 
    \label{eq:B3c}
    &= \text{Tr}[|\Delta|0\rangle \langle 0 |\Delta^\dag + \Delta|0\rangle \langle 0 |C^\dag + C|0\rangle \langle 0 |\Delta^\dag|]\\ 
    &\leq \text{Tr}[|\Delta|0\rangle \langle 0 |\Delta^\dag| + |\Delta|0\rangle \langle 0 |C^\dag| + |C|0\rangle \langle 0 |\Delta^\dag|]\\
    &\leq \|\Delta\|^2 \||0\rangle \langle 0 |\|_1 + 2 \|\Delta\| \|C|0\rangle \langle 0 |\|_1\\
    &\leq \|\Delta\|^2 + 2\|\Delta\|\\
    &\in O(nd \epsilon')
    \end{align}
    \ees
where $\bm{s}$ represents a bitstring, the first inequality comes from bounding the total variation distance by the trace-norm distance,\footnote{$\sum_{\bm{s}} |P'(\bm{s}) - P(\bm{s})|\leq \Tr|\rho'-\rho|$, where $P(\bm{s}) = \Tr(\ket{\bm{s}}\!\bra{\bm{s}}\rho)$, with $\rho = C\ket{0}\!\bra{0}C^\dagger$, and the inequality follows since the trace norm distance is the maximum of $\frac{1}{2}\sum_{\bm{s}} |\Tr(E_{\bm{s}}(\rho'-\rho))|$ over all possible generalized measurements $E_{\bm{s}}$, which includes the projective measurement $\ket{\bm{s}}\!\bra{\bm{s}}$~\cite{Lidar-Brun:book}.} the third inequality from properties of the operator and trace norms,\footnote{$|\Tr(AB)|\leq \|A\| \|B^\dagger\|_1$~\cite{Bhatia:book,PhysRevA.78.012308}.} and in the last line we assumed that $\epsilon'<1/(nd)$.

Suppose we wish to approximate a circuit to total variation distance $O(\epsilon)$. Then assuming the worst case bound in Eq.~\eqref{eq:TVDbound}, we must have $\epsilon' \sim \frac{\epsilon}{nd}$. In this case the total runtime of the QAOA circuit is $O(nd\epsilon'^{-4}) = O(n^5 d^5 \epsilon^{-4})$.

\section{Failure of the Euler angles construction to generate single-qubit gates in the V$Z$ model}
\label{app:Euler}

We can define su$(2)$ generators in the V$Z$ model: 
\beq
\tilde{X}=\cos(\alpha)X+\sin(\alpha)Z \ , \ \tilde{Z}=-\sin(\alpha)X+\cos(\alpha)Z .
\eeq
It is simple to check that this pair satisfies the su$(2)$ commutation relations along with $Y$ (e.g., $[\tilde{Z},\tilde{X}]=2iY$, etc.). From here we can construct any SU$(2)$ single-qubit gate using the standard Euler angles construction:
\bes
\begin{align}
& g(\phi,\theta,\psi) = U^z(\phi)U^x(\theta)U^z(\psi) = \\
& \quad \begin{bmatrix} 
 \cos (\theta ) e^{-i (\psi +\phi )} & -i \sin (\theta ) e^{-i (\phi -\psi )} \\
 -i \sin (\theta ) e^{i (\phi -\psi )} & \cos (\theta ) e^{i (\psi +\phi )} \\
\end{bmatrix},
\end{align}
\ees
where the angles take values within the intervals $\theta\in[0,\pi/2]$, $\phi\in[0,\pi]$, $\psi\in[0,2\pi]$ (mod $\pi$), and where $U^x(\varphi) = \exp(-i \varphi \tilde{X})$ and $U^z(\varphi) = \exp(-i \varphi\tilde{Z})$. This approach is certainly feasible for an applied layer in which all qubits undergo the same single-qubit gate. However, it fails when at least one qubit (but not all) in an applied layer is idle, since in the Euler angle construction the only way to generate the identity gate is to choose the angles as
\beq
\{\theta = 0,\psi = 2\pi-\phi \}\ \ \mod \pi .
\eeq
The problem is that fixing $\theta=0$ restricts the ability to generate an arbitrary single-qubit gate on the non-idle qubits. I.e., suppose that the applied layer includes one idle and another non-idle qubit requiring the pure $\tilde{X}$-rotation $g_2(0,\theta,0)=g(\phi,\theta,2\pi-\phi)$ with $\theta>0$. It is not possible to implement both gates without restricting generality within the same time-interval (applied layer) since this limits the allowed values of $\theta$. Namely, for a layer of duration $t$, on the one hand we would need $t\cos\alpha = k\pi$ and $t\sin\alpha = k\pi$ for the idle qubit (using $\theta=k\pi$ with integer $k$ instead of $\theta=0$), but on the other hand we would also need $t\cos\alpha =\theta$ and $t\sin\alpha =\theta$ for the non-idle qubit, thus forcing $\theta$ to be a multiple of $\pi$.

\section{Proof of Eq.~\eqref{eq:psialpha'}}
\label{app:sol}

In this appendix our goal is to show that we can decompose a desired arbitrary single-qubit rotation 
\beq
\label{eq:decomps-g}
g = e^{-i\gamma' \vec r \cdot \vec \sigma} ,
\eeq
where $\vec r = (\sin(2\theta)\cos(2\phi),\ \sin(2\theta)\sin(2\phi),\ \cos(2\theta))$,
as $g = V U V^\dag$, where
\begin{subequations}
\label{eq:decomps}
\begin{align}
\label{eq:decomps-U}
&U = e^{-i\gamma' (\cos(2\alpha)Z + \sin(2\alpha)X)}\\
\label{eq:decomps-V}
&V = e^{-i\alpha'(\cos(2\psi)Z+\sin(2\psi) X)} .
 \end{align}
\end{subequations}
Our second goal is to derive the values of $\alpha'$ and $\psi$ given in Eq.~\eqref{eq:psialpha'}. In Eq.~\eqref{eq:mid_rot} we take $\gamma'=\pi+\gamma$. However, the results of this proof do not depend on specific choice of $\gamma'$, aside from implicit dependence hidden in $\alpha$ or other parameters.

Defining the adjoint group action as $\text{Ad}_{g}(B) \equiv gBg^{-1}$ and using the fact that
\begin{equation}\label{eq:ABA}
    \text{Ad}_{A}(e^B) = e^{ABA^\dag} 
\end{equation}
for any unitary $A$ and operator $B$, and that
\begin{equation}\label{eq:axisrot}
    \text{Ad}_{e^{i\theta Y}}(Z) = \cos(2\theta)Z+\sin(2\theta)X ,
\end{equation}
and similar identities obtained by cycling $X$, $Y$ and $Z$, we may rewrite
\begin{subequations}\label{eq:UV2}
\begin{align}
\label{eq:UV2-1}
U &= \text{Ad}_{e^{i\alpha Y}}(e^{-i\gamma' Z})\\
\label{eq:UV2-2}
V &= \text{Ad}_{e^{i\psi Y}} (e^{-i\alpha' Z}) .
\end{align}
\end{subequations}
Then setting $u = 2\alpha-2\psi$, we see that:
\bes
\begin{align}
    & V U V^\dag  = \text{Ad}_{e^{i\psi Y}}(\text{Ad}_{e^{-i\alpha' Z}}(\text{Ad}_{e^{i(\alpha-\psi)}}(e^{-i\gamma' Z}))) \\
    &= \text{Ad}_{e^{i\psi Y}}(\text{Ad}_{e^{-i\alpha' Z}}(e^{-i\gamma' (\cos(u)Z+\sin(u)X)}))\\
& = \text{Ad}_{e^{i\psi Y}}e^{-i\gamma' (\cos(u)Z+\sin(u)(\cos(2\alpha')X-\sin(2\alpha')Y))}\\
    &= e^{-i\gamma' (v_x X + v_y Y + v_z Z)}
\end{align}
\ees
where in the final line
\begin{subequations}
\begin{align}
\label{}
v_x &= \cos(u)\sin(2\psi)+\sin(u)\cos(2\alpha')\cos(2\psi)\\ 
\label{}
v_y &= -\sin(u)\sin(2\alpha')\\ 
\label{}
v_z &= \cos(u)\cos(2\psi)-\sin(u)\cos(2\alpha')\sin(2\psi) .
\end{align}
\end{subequations}
Using the representation of the desired gate $g$ as in Eq.~\eqref{eq:decomps-g}, this gives a system of three equations:\\
$v_x = r_x$:
\bes
\label{eq:vxterm}
\begin{align}
&\sin(2\theta)\cos(2\phi)\notag\\ 
&= \cos(u)\sin(2\psi)+\sin(u)\cos(2\alpha')\cos(2\psi)\\ 
&= \sin(2\alpha) + \sin(u)\cos(2\psi)(\cos(2\alpha')-1) ,
\end{align}
\ees
and $v_y = r_y$:
\begin{equation}\label{eq:vyterm}
\sin(2\theta)\sin(2\phi) = -\sin(u)\sin(2\alpha') ,
\end{equation}
and $v_z = r_z$:
\bes
\label{eq:vzterm}
\begin{align}
&\cos(2\theta)\notag\\ 
&= \cos(u)\cos(2\psi) - \sin(u)\cos(2\alpha')\sin(2\psi)\\ 
&= \cos(2\alpha) - \sin(u)\sin(2\psi)(\cos(2\alpha')-1) .
\end{align}
\ees
The existence of a solution of these three equations for $\alpha$, $\alpha'$ and $\psi$ proves that $g$ can be written as $VUV^\dagger$ as claimed.

Eq.~\eqref{eq:vyterm} immediately reduces to 
\begin{equation}
    \sin(2\alpha') = -\frac{\sin(2\theta)\sin(2\phi)}{\sin(u)} ,
\end{equation}
while  Eqs.~\eqref{eq:vxterm} and~\eqref{eq:vzterm} can be combined into
\begin{equation}
    \tan(2\psi) = \frac{\cos(2\alpha)-\cos(2\theta)}{\sin(2\theta)\cos(2\phi)-\sin(2\alpha)} .
\end{equation}
Thus, we have the values of $\alpha'$ and $\psi$ given in Eq.~\eqref{eq:psialpha'}, up to a choice of the range of $\tan^{-1}$ and $\sin^{-1}$.

\section{Two qubit coupling decomposition}
\label{app:decomp}

The form of the decomposition in Eq.~\eqref{eq:2qubitdecomp} is motivated by Ref.~\cite{zhang2003}; any $U \in SU(4)$ can be decomposed as:
\be \label{eq:su4decompos}
U = k_1 e^{-i (D_1(X \otimes X) + D_2 (Y \otimes Y) + D_3 (Z \otimes Z))} k_2
\ee
where $k_1 , k_2 \in SU(2) \otimes SU(2)$ and $D_1, D_2, D_3 \in \mathbb{R}$. To write $U_{ij} = e^{-it(a (X_i +X_j) + bZ_i Z_j)}$ in this form we first diagonalize $h_{ij} = a (X_i +X_j) + bZ_i Z_j$ and obtain its eigenvalues ($e_i$) and eigenstates ($|e_i \rangle$). Then we can obtain $U_{ij}$ as $\sum_i e^{-it e_i}|e_i \rangle \langle e_i |$. If we decompose $U_{ij}$ in the Pauli basis, we can show that it has the following form:
\begin{align} \nn
& U_{ij} = P_{00} I + P_{01} X_j+ P_{10} X_i + P_{11} X_i X_j  \\
&+ P_{22} Y_i Y_j + P_{33} Z_i Z_j,
\end{align}
with other terms equal to zero. From this form we see that we do not need to consider the most general form of a single qubit gate for $k_1$ and $k_2$. Instead we start from the following ansatz, which we will show to be sufficient:
\be  
     e^{-i\beta (X_i+X_j)}e^{-iD_1X_i X_j} e^{-iD_2Y_i Y_j}e^{-iD_3 Z_i Z_j}e^{-i\beta (X_i+X_j)} .
\ee
With this ansatz, the problem becomes solving the following equality:
\bes
\label{eq:2qubitdecompapp}
\begin{align}
\label{eq:2qubitdecompapp-a}  
     & e^{-it(a (X_i+X_j) + b Z_i Z_j)}\\
    \label{eq:2qubitdecompapp-b}  
    &= e^{-i\beta (X_i+X_j)}e^{-iD_1X_i X_j} e^{-iD_2Y_i Y_j}e^{-iD_3 Z_i Z_j}e^{-i\beta (X_i+X_j)} 
\end{align}
\ees
This requires solving $16$ coupled equations, each for one element of a $4 \times 4$ matrix. To simplify the task we represent both sides of Eq.~\eqref{eq:2qubitdecompapp} in the ``magic basis" defined in Ref.~\cite{Hill1997} as:
\begin{subequations}
\label{eq:magicbasis}
\begin{align}
\label{eq:magicbasis-1}
|\phi_1 \rangle = \frac{1}{\sqrt{2}} (|00\rangle + |11 \rangle )  \qquad |\phi_2 \rangle = \frac{i}{\sqrt{2}} (|00\rangle - |11\rangle) \\
\label{eq:magicbasis-2}
|\phi_3 \rangle = \frac{1}{\sqrt{2}} (|01\rangle - |10 \rangle )  \qquad |\phi_4 \rangle = \frac{i}{\sqrt{2}} (|01\rangle + |10 \rangle ) .
\end{align}
\end{subequations}
In this basis both matrices are sparse, and can be equated term by term to solve for the given parameters. Also, the non-local part of the right hand side of Eq.~\eqref{eq:2qubitdecompapp-b}, $e^{-iD_1 X_i X_j} e^{-iD_2 Y_i Y_j}e^{-iD_3 Z_i Z_j}$, is diagonal in the magic basis. 
The following matrix changes the basis from the computational basis to the magic basis:
\be
Q = \frac{1}{\sqrt{2}}
\begin{bmatrix} 
1 & i & 0 & 0 \\
0 & 0 & 1 & i \\
0 & 0 & -1 & i \\
1 & -i & 0 & 0
\end{bmatrix} .
\ee
Using $Q$ we can write any matrix $U \in SU(4)$ in the magic basis as:
\be
U_{\text{mag}} = \sum_{i,j} (Q^\dagger U Q)_{ji} |\phi_j \rangle \langle \phi_i |
\ee
The ansatz defined in Eq.~\eqref{eq:2qubitdecompapp-b} takes the following form in this basis:
\be \label{eq:Umagic}
\begin{bmatrix}
 u_{11} & 0 & 0 & u_{14} \\
 0 & u_{22} & 0 & 0 \\
  0 & 0 & u_{33} & 0 \\
  u_{41} & 0 & 0 & u_{44} \\
\end{bmatrix} ,
\ee
where
\begin{subequations}
\label{eq:ansmagic}
\begin{align}
\label{eq:ansmagic11}
    u_{11} &= \frac{1}{2} e^{- i (D_1+D_2+D_3)}( e^{2 i D_3} (\cos(4 \beta )-1) \nn \\ 
    &+e^{2 i D_2} (\cos (4 \beta )+1) ) \\
    \label{eq:ansmagic14}
    u_{14} &= e^{-i D_1} \sin (4 \beta ) \cos(D_2-D_3) \\
    \label{eq:ansmagic22}
    u_{22} &= e^{ i (D_1-D_2-D_3)} \\
    \label{eq:ansmagic33}
    u_{33} &= e^{i (D_1+D_2+D_3)} \\
    \label{eq:ansmagic41}
    u_{41} &= -e^{-i D_1} \sin (4 \beta ) \cos(D_2-D_3) \\
    \label{eq:ansmagic44}
    u_{44} &= \frac{1}{2} e^{-i (D_1+D_2+D_3)} (e^{2 i D_2} (\cos (4 \beta )-1) \nn \\ 
    &+e^{2 i D_3} (\cos (4 \beta )+1)) .
\end{align}
\end{subequations}
The unitary defined in Eq.~\eqref{eq:2qubitdecompapp-a} takes the same form as shown in Eq.~\eqref{eq:Umagic} in the magic basis, such that:
\begin{subequations}
\label{eq:magic}
\begin{align}
\label{eq:magic11}
    u_{11} &= \cos \left(t \sqrt{4 a^2+b^2}\right)-\frac{i b \sin \left(t \sqrt{4 a^2+b^2}\right)}{\sqrt{4 a^2+b^2}} \\
    \label{eq:magic14}
    u_{14} &= \frac{2 a \sin \left(t \sqrt{4 a^2+b^2}\right)}{\sqrt{4 a^2+b^2}}\\
    \label{eq:magic22}
    u_{22} &= e^{-i b t} \\
    \label{eq:ansmagic33}
    u_{33} &= e^{i b t} \\
    \label{eq:magic41}
    u_{41} &= -\frac{2 a \sin \left(t \sqrt{4 a^2+b^2}\right)}{\sqrt{4 a^2+b^2}} \\
    \label{eq:magic44}
    u_{44} &= \cos \left(t \sqrt{4 a^2+b^2}\right)+\frac{i b \sin \left(t \sqrt{4 a^2+b^2}\right)}{\sqrt{4 a^2+b^2}} .
\end{align}
\end{subequations}
Equating Eq.~\eqref{eq:ansmagic} and Eq.~\eqref{eq:magic} line by line, we obtain Eq.~\eqref{eq:2qubitsolns}.

\section{Numerical solutions for the sinc function}
\label{app:numeric}

In this appendix we show that for every value of $C \in [0,\pi]$, there exists a $k \in \{0,1,2,3\}$ such that the equation:

\begin{equation}\label{eq:sincsolve}
   \text{sinc}(\sqrt{x^2+(C+k\pi)^2}) = \text{sinc}(C+k\pi) 
\end{equation}
has a solution for some real $x$ [where $x=2at$ in Eq.~\eqref{eq:sincs-4}]. This equation is transcendental and cannot be solved analytically. However, for a given pair of $(C, k)$, numerical methods can approximately solve Eq.~\eqref{eq:sincsolve} in the domain $x>0$ or determine that no solution exists. We checked the solvability of Eq.~\eqref{eq:sincsolve} for $C \in [0,\pi]$ and $k=0,1,2,3$ to determine the minimum range of $k$ needed to make Eq.~\eqref{eq:sincsolve} solvable for all $C$. The results of this numerical determination are plotted in Fig.~\ref{fig:Cplot}, from which one can see that $k\in \{0,1,2,3\}$ is sufficient.

\begin{figure}[h]
    \centering
    \includegraphics[scale=.4]{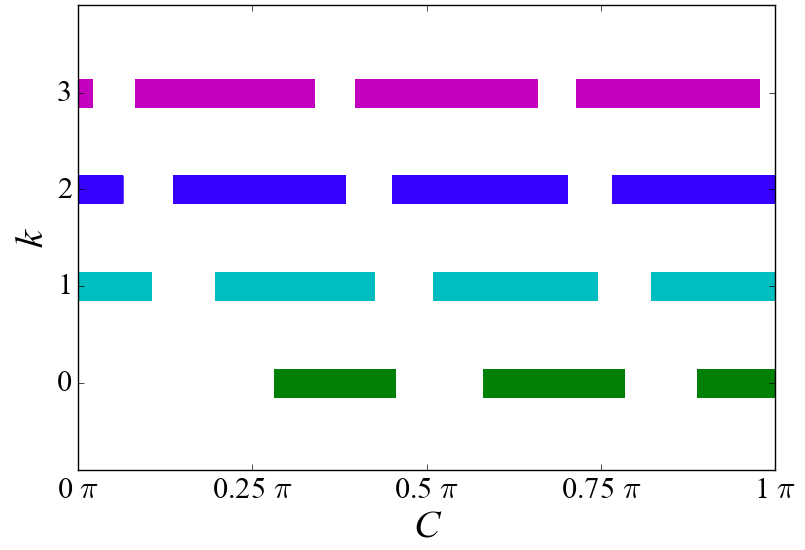}
    \caption{Values of $C$ and $k$ for which Eq.~\eqref{eq:sincsolve} can be solved for $x>0$. Color is included for visibility. For each value of $C$, at least one value of $k$ leads to a solution.}
    \label{fig:Cplot}
\end{figure}

\end{appendix}

%\bibliography{bibli}{}
%\bibliographystyle{apsrev4-1}

%merlin.mbs apsrev4-1.bst 2010-07-25 4.21a (PWD, AO, DPC) hacked
%Control: key (0)
%Control: author (72) initials jnrlst
%Control: editor formatted (1) identically to author
%Control: production of article title (-1) disabled
%Control: page (0) single
%Control: year (1) truncated
%Control: production of eprint (0) enabled
%

\end{document}